\documentclass[conference]{IEEEtran}
\IEEEoverridecommandlockouts
% The preceding line is only needed to identify funding in the first footnote. If that is unneeded, please comment it out.
\usepackage{cite}
\usepackage{amsmath,amssymb,amsfonts,amsthm}
\usepackage{graphicx}
\usepackage{soul}
\usepackage{textcomp}
\usepackage{nicefrac,xfrac}
\usepackage{flushend}
\usepackage{empheq}
\usepackage{xcolor}
\usepackage{comment}
\usepackage{hyperref}
\usepackage{algorithm}
\usepackage{algpseudocode}
\newtheorem{theorem}{Theorem}
\newtheorem{lemma}{Lemma}
\usepackage{pgfplots}

\theoremstyle{remark}
\newtheorem{rem}{\bf Remark}

\DeclareMathOperator*{\argmin}{argmin}
\usepackage{bbm}
\usepackage{circuitikz}
\usetikzlibrary{arrows,shapes}
\usepackage{tkz-euclide}
\usetikzlibrary{backgrounds}
\usepackage{mathtools}

\definecolor{azure}{rgb}{0.0, 0.5, 1.0}

\definecolor{chestnut}{rgb}{0.8, 0.36, 0.36}
\definecolor{airforceblue}{rgb}{0.36, 0.54, 0.66}
\definecolor{cadmiumorange}{rgb}{0.93, 0.53, 0.18}
\definecolor{bleudefrance}{rgb}{0.19, 0.55, 0.91}
\definecolor{carolinablue}{rgb}{0.6, 0.73, 0.89}
\definecolor{blue(ncs)}{rgb}{0.0, 0.53, 0.74}
\definecolor{dodgerblue}{rgb}{0.12, 0.56, 1.0}
\definecolor{cssgreen}{rgb}{0.0, 0.5, 0.0}
\definecolor{cadmiumgreen}{rgb}{0.0, 0.42, 0.24}
\definecolor{cadmiumorange}{rgb}{0.93, 0.53, 0.18}
\definecolor{amaranth}{rgb}{0.9, 0.17, 0.31}
\definecolor{bluegray}{rgb}{0.4, 0.6, 0.8}
\definecolor{cadmiumgreen}{rgb}{0.0, 0.42, 0.24}
\definecolor{chestnut}{rgb}{0.8, 0.36, 0.36}
\definecolor{ceil}{rgb}{0.57, 0.63, 0.81}
\definecolor{ashgrey}{rgb}{0.7, 0.75, 0.71}

\def\BibTeX{{\rm B\kern-.05em{\sc i\kern-.025em b}\kern-.08em
    T\kern-.1667em\lower.7ex\hbox{E}\kern-.125emX}}
    \bstctlcite{IEEEexample:BSTcontrol}
\begin{document}
%\bstctlcite{IEEEexample:BSTcontrol}
\title{Multiple Receiver Over-the-Air Computation for Wireless Networked Control Systems\\
% {\footnotesize \textsuperscript{*}Note: Sub-titles are not captured in Xplore and
% should not be used}
% \thanks{Identify applicable funding agency here. If none, delete this.}
}

\author{Seif Hussein\textsuperscript{*}, Chinwendu Enyioha\textsuperscript{\textdagger},  Carlo Fischione\textsuperscript{*}
    \\
\IEEEauthorblockA{\textsuperscript{*}Division of Network and Systems Engineering, KTH Royal Institute of Technology, Stockholm, Sweden\\
\textsuperscript{\textdagger}Department of Electrical and Computer Engineering, University of Central Florida, Orlando, FL, USA\\
Email: \{seifh, carlofi\}@kth.se, cenyioha@ucf.edu}
}

\maketitle

\begin{abstract}
% This paper studies the stability and performance of wireless networked control systems (WNCS) enhanced with over-the-air computation (OAC), aiming to reduce communication overhead and improve energy efficiency in large-scale systems. We propose a multi-sender, multi-receiver OAC framework in which actuators directly aggregate sensor data across multiple time slots, bypassing the need for a centralized controller. By formulating and solving the required OAC transmissions and corresponding linear combinations as a constrained matrix factorization problem, we derive conditions under which the closed-loop system remains stable despite channel noise and interference. Furthermore, we present an iterative and convexifying procedure to obtain structured control policies with respect to the network topology. Numerical results validate our approach to preserve closed-loop stability under constrained power while achieving robust control performance.
We propose a multi-sender, multi-receiver over-the-air computation (OAC) framework for wireless networked control systems (WNCS) with structural constraints. Our approach enables actuators to directly compute and apply control signals from sensor measurements, eliminating the need for a centralized controller. We use an iterative and convexifying procedure to obtain a control law that is structured with respect to the network topology and minimizes the overall system energy-to-energy gain. Furthermore, we solve a constrained matrix factorization problem to find the optimal OAC configuration with respect to power consumption, robustness, and stability of the WNCS. We prove the convergence of our proposed algorithms and present numerical results that validate our approach to preserve closed-loop stability with robust control performance and constrained power.

% The control law is designed by combining an iterative linearization procedure  constrained matrix factorization approach for OAC transmissions. This synergy ensures that the resulting network respects a prescribed topology while accounting for channel noise, interference, and tight power budgets. Our analysis shows that the design preserves stability guarantees, as the closed-loop dynamics remain Schur even in the presence of communication and resource constraints. Numerical experiments confirm that the proposed architecture attains robust performance and strong disturbance attenuation, consistently outperforming existing single-receiver OAC schemes in terms of both stability and control accuracy. These results highlight the potential of multi-receiver OAC to enable scalable, power-efficient, and fully distributed control in next-generation wireless systems.
\end{abstract}

\begin{IEEEkeywords}
Over-the-Air Computation, Wireless Networked Control System, Structured Control, Matrix Factorization, ADMM, LMI
\end{IEEEkeywords}

\section{Introduction} \label{sec:introduction}
Wireless control networks continue to be central to the development of distributed systems, with applications ranging from industrial automation and robotics to autonomous vehicles and Internet of Things devices\cite{pajic2011wireless,park2017wireless,zeng2019joint}. These systems rely on robust communication protocols to maintain stability and performance. Traditional wireless control methods often face challenges in terms of bandwidth limitations, latency, and signal reliability, particularly in large-scale networks with limited communication resources \cite{park2017wireless}.

Recent studies have shown that over-the-air computation (OAC) leverages the signal superposition property of multiple-access channels to enable faster and more communication-efficient signal aggregation during transmission \cite{csahin2023survey}. OAC emerges as a viable way to address the limitations of conventional wireless control systems by providing an architecture for actuators to process signals without need for explicit communication from each device to a centralized controller, thus reducing communication overhead and improving energy efficiency~\cite{park2021wireless}.

In wireless control systems, a primary objective is ensuring the stability of the overall network. In this context, stability refers to the ability of the system to maintain predictable and reliable performance despite external disturbances and fluctuations in the operating conditions of the network. The integration of OAC into wireless control systems adds a layer of complexity, as it introduces additional communication delays, noise, and possible interference that need to be taken into account when analyzing the closed-loop stability of the system.

% This paper studies the stability of WNCSs that utilize OAC, with a focus on understanding the impacts of channel characteristics, network topology, and the interplay between communication and OAC on the stability of the closed-loop system. Stability of the system is achieved by formulating a matrix factorization problem that takes into account the system constraints including transmit powers, channel state and precoder-decoder pairs between multiple transmitting and receiving nodes.
% We continue the rest of the paper in Section \ref{sec:literature-review} with an overview of existing literature to place our work in context. We present the system model and formally state our problem in Section \ref{sec:problem} along with the control algorithm over the air and main results in Section \ref{sec:mainresults}. We follow with numerical illustrations in Section \ref{sec:simulations} and conclude in Section \ref{sec:conclude}.

An overview of existing literature that puts this work in context is presented in Section~\ref{sec:literature-review}. We present our system model and formally state our problem in Section~\ref{sec:problem}. In Section~\ref{sec:mainresults} our proposed algorithms are presented along with our main results. Numerical simulation results are presented in Section~\ref{sec:simulations} and the paper is concluded in Section~\ref{sec:conclude}.

%\newpage
\section{Related 
Work}\label{sec:literature-review}

Stability of WNCSs has been widely studied, with early studies focusing on traditional communication protocols that are inherently bandwidth-constrained. These systems often rely on periodic updates from sensors and controllers, leading to potential issues in terms of latency and the reliability of control signals, particularly in large-scale systems. 
% From initial single sensor to controller and actuator paradigms to models in which the network acts as a controller~\cite{pajic2011wireless}, two identified merits of the network model are robustness and efficiency.

In~\cite{abari2016over}, OAC was introduced as a method to reduce the communication burden in wireless sensor networks. OAC allows computation to be performed during the signal transmission process, rather than requiring a dedicated transmission phase, significantly reducing latency and increasing system bandwidth efficiency.
This makes it especially advantageous for large-scale networks with limited communication resources. The method has been widely studied in the distributed learning literature.
 %\hl{We also need to decide on an acronym for the air comp. I've seen myself jumping from oca to ota and air comp} 
 Recently, early studies have explored the integration of OAC in network control where the control signals are computed at the transmitter, enabling efficiency of the overall network architecture. Stability continues to be the main objective, which is well understood in the traditional wireless control contexts~\cite{walsh2002stability,carnevale2007further} where Lyapunov-based approaches quantify the effect of the network channel conditions on the system stability. It is known that management of factors such as transmission power and signal-to-noise ratios (SNR) can impact stability. 

To leverage the benefits of OAC in wireless network control systems, some challenges are presented specifically with respect to achieving stability of closed-loop system. This is due to the dynamic conditions of the wireless channels, signal interference and noise. In fact, the impact of channel noise has been studied in \cite{csdesign}, where it was noted that noise-induced errors during signal transmission can accumulate and destabilize the system in not properly accounted for in the system design. The authors proposed adaptive techniques including power control and error correction methods to counteract the impact of noise on the system.
Another key factor that affects the closed-loop system stability in the multi-sender-multi-receiver OAC model considered here is the network topology. It is well known that the network topology plays an important role in determining the stability of a wireless control system~\cite{walsh2002stability}. However, the interaction between multiple sender-receiver pairs with the OAC architecture creates additional complex dynamics. This is especially true in the model considered in this paper, where the network topology has structured constraints. In \cite{10124016}, the authors address the multiple receiver OAC framework with a system model based on approximative matrix factorization. However, the approximative approach taken makes the framework unsuitable for WNCSs as the stability of the control system cannot be guaranteed.

% {\color{red} For the extended version, look at the other MIMO paper and closely look at the delay argument.}
Aggregating signals with OAC has been highlighted as a viable approach to increase the efficiency of WNCSs~\cite{park2021wireless,parkopt}. These approaches rely on a many-to-one architecture, which severely constrains the possible topologies of the network. In particular, the frameworks are only compatible with single-actuator systems. The OAC control framework has been extended to multiple actuators in ~\cite{10189841}, which proposes a multi-receiver approach. Although it is able to handle multiple actuators, this approach still relies on a \textit{central} controller which is \textit{wired} to all actuators, preventing fully distributed wireless network topologies.
%Moreover, the controller in ~\cite{10189841} is wired to all actuators, further limiting the network architecture.
\subsection{Our Contribution}
We present a time-slot-based many-to-many OAC architecture for a WNCS. Actuators receive information from sensor nodes directly by OAC, bypassing the need for a controller unit for aggregation and post-processing. Based on the WNCS topology, we formulate and solve a structured non-convex robust optimal control problem using a linearization-based algorithm. To obtain the optimal OAC configuration and ensure the stability of the WNCS, we formulate and solve a constrained non-convex matrix factorization problem using a modified Alternating Direction Method of Multipliers (ADMM) algorithm. The convergence of both algorithms to stationary points of the non-convex problems is established. Simulations validate the stable and robust performance of our approach.

% First, we present a multi sender-receiver (many-to-many) OAC architecture for a network control system that avoids the need to have explicit controller (information aggregating) nodes. Actuators receive information from sensor nodes directly by OAC. We obtain optimal OAC transmission parameters by formulating and solving a constrained matrix factorization problem. and establish stability results of the closed-loop system. 
% Further, we note the difficulty in finding an initialization point and for the algorithm presented in~\cite{pajic2011wireless} where a related problem is considered and provide theoretical insight into how initialization is achieved (even within our more communication-efficient over-the-air approach).
% \begin{itemize}
%     \item In comparison to Pajic, we are efficient using OAC to eliminate the dynamic compensating nodes. In WNCS context, we are efficient. Use efficiency as motivation.
%     \item For OAC, in comparison with Park paper (many-to-one). Approach there limited with strong assumptions $C = I$. Not distributed. We relax those assumptions.
%     \item In comparison with MIMO paper, we move to many-to-many and fully enable distributed control. They can do one time instance, whereas we need many. No need for an explicit controller.
%     \item Need time slots to equal min number of sensor/actuators.
% \end{itemize}

%\newpage
%\section{Problem Formulation and Related Work}\label{sec:problem}
%\newpage
\section{System Model}\label{sec:problem}
\subsection{Control System}
We consider a system with dynamics described by  \begin{equation}
    \label{eq:plant}
    \begin{aligned}
    \mathbf{x}[k+1] &= \mathbf{A}\mathbf{x}[k] + \mathbf{B}\mathbf{u}[k]\\
    \mathbf{y}[k] &= \mathbf{C}\mathbf{x}[k],
    \end{aligned}
\end{equation}
where $\mathbf{A} \in \mathbb{R}^{n \times n}, \mathbf{B}\in \mathbb{R}^{n \times m}$ and $\mathbf{C} \in \mathbb{R}^{p \times n}$.
Here, the output vector $\mathbf{y}[k] = [y_1[k] \: y_2[k] \ldots y_p[k]]^{\rm T}$ represents sensor measurements of the $n$-dimensional state vector $\mathbf{x}[k]$, where the sensors are denoted by $s_1, s_2, \ldots, s_p$. The input vector $\mathbf{u}[k] = [u_1[k] \: u_2[k] \ldots u_m[k]]^{\rm T}$ comprises signals delivered to the plant by actuators $a_1, a_2, \ldots, a_m$. Sensors communicate their signals to actuator nodes through a wireless network.

This wireless network is modeled by the graph $\mathcal{G} = \{\mathcal{V}, \mathcal{E}\}$, where $\mathcal{V} = \mathcal{V}_A \cup \mathcal{V}_S = \{a_1, \ldots, a_m, s_1, \ldots, s_p\}$ is the set of $m+p$ sensors and actuators. The edge set $\mathcal{E}$ characterizes the network's communication structure; specifically, an edge $(a_j, s_i) \in \mathcal{E}$ indicates that actuator $a_i$ is capable of receiving data directly from sensor $a_j$.

Furthermore, the input applied to the plant, $u_i[k]$, is defined as a linear combination of the values from nodes in the neighborhood of actuator $a_i$, given by 
\begin{equation}
    \label{eq:zik}
    u_i[k] = \sum_{s_j \in \mathcal{N}_{a_i}}g_{ij}y_j[k].
\end{equation}
%\vspace{-4pt}
The coefficients $g_{ij}$, which are nonzero scalars, determine the specific linear combinations computed by each actuator in the network.
\definecolor{sage}{HTML}{57d491}
\definecolor{niceblue}{HTML}{57a8d4}

\begin{figure}[t!]
\centering
\begin{tikzpicture}[transform shape, scale=1,thick]
\tikzstyle{every node}=[font=\footnotesize]
    % \node at (4.3,2.8) {\footnotesize Map};
    % \draw[thick] [decorate, decoration = {brace}] (3.4,2.5) --  (5.2,2.5);
    
    % \node at (5.95,2.8) {\footnotesize Shuffle};
    % \draw[thick] [decorate, decoration = {brace}] (5.3,2.5) --  (6.6,2.5);
    % \node at (7.55,2.8) {\footnotesize Reduce};
    % \draw[thick] [decorate, decoration = {brace}] (6.7,2.5) --  (8.4,2.5);
    
    % \foreach \x/\y in {1.45/1,0.35/2,-1.25/$K$}%{1,2,3,4,5,6,7,8}
    %     \draw[fill=sage,rounded corners] (3.5,\x-0.15) rectangle node[black] {\y} ++(0.6,0.6);
    % \node at (3.8,-0.3) {$ \vdots$};

    % \draw[->] (2.6,1.75-0.15) -- (3.5,1.75-0.15);
    % \draw[->] (2.6,0.65-0.15) -- (3.5,0.65-0.15);
    % \draw[->] (2.6,-0.95-0.15) -- (3.5,-0.95-0.15);
    \foreach \x/\y in {1.75/1,0.65/2,-0.95/p}
        \node at (5,\x-0.15) {$s_\y$};
    \node at (5,-0.3) {$ \vdots$};
    % \draw[->] (4.1,1.75-0.15) -- (4.75,1.75-0.15);
    % \draw[->] (4.1,0.65-0.15) -- (4.75,0.65-0.15);
    % \draw[->] (4.1,-0.95-0.15) -- (4.75,-0.95-0.15);

    %\draw[->] (5.3,1.75) -- (6.5,1.75) node [midway,above] {\footnotesize $d_{q,1}$};
    %\draw[->] (5.3,0.65) -- (6.5,0.65) node [midway,above] {\footnotesize $d_{q,2}$};
    %\draw[->] (5.3,-0.95) -- (6.5,-0.95) node [midway,above] {\footnotesize $d_{q,K}$};
    \filldraw[blue] (5.4,1.75-0.15) circle (1.5pt);
    \draw[->,blue] (5.4,1.75-0.15) -- (6.6,1.75-0.15);
    \draw[->,blue] (5.4,1.75-0.15) -- (6.6,0.65-0.15);
    \draw[->,blue] (5.4,1.75-0.15) -- (6.6,-0.95-0.15);
    %\draw[->,thick,dotted] (5.2,1.3) .. controls +(down:0.8) and +(right:0.9) .. (6,2) node[anchor=east]{$\mathcal{T}_1$};
    %\draw[->,dashed] (5.2,1.3-0.15) .. controls (5.8,1.5-0.15) .. (6,2-0.15);
    %\node[] at (6,2.1) {\scriptsize$\mathcal{N}_{s_1}$};

    \draw[->,dashed] (6.8,1.2-0.15) .. controls (6.2,1.2-0.15) .. (6,2-0.15);
    \node[] at (6,2.1) {\scriptsize$\mathcal{N}_{a_1}$};
    
    \filldraw[cyan] (5.4,0.65-0.15) circle (1.5pt);
    \draw[->,cyan] (5.4,0.65-0.15) -- (6.6,0.65-0.15);
    \draw[->,cyan] (5.4,0.65-0.15) -- (6.6,1.75-0.15);
    \draw[->,cyan] (5.4,0.65-0.15) -- (6.6,-0.95-0.15);

    \filldraw[magenta] (5.4,-0.95-0.15) circle (1.5pt);
    \draw[->,magenta] (5.4,-0.95-0.15) -- (6.6,-0.95-0.15);
    \draw[->,magenta] (5.4,-0.95-0.15) -- (6.6,0.65-0.15);
    \draw[->,magenta] (5.4,-0.95-0.15) -- (6.6,1.75-0.15);
    \draw[->,dashed] (6.8,-0.5-0.15) .. controls (6.2,-0.5-0.15) .. (6,-1.3-0.15);
    \node[] at (6,-1.7) {\scriptsize$\mathcal{N}_{a_m}$};
    %\draw[->,dashed] (5.2,-0.5-0.15) .. controls (5.8,-0.5-0.15) .. (6,-1.3-0.15);
    %\node[] at (6,-1.7) {\scriptsize$\mathcal{N}_{s_p}$};

    \foreach \x/\y in {1.75/1,0.65/2,-0.95/m}
        \draw[fill=sage,rounded corners] (6.6,\x-0.15-0.3) rectangle node[black] {$a_\y$}++(0.5,0.5);
        %\node at (6.9,\x-0.15) {$a_\y$};
    \node at (6.9,-0.3) {$ \vdots$};
    % \foreach \x/\y in {0/1,1/2,2/3,3/4,5/Q-1,6/Q}
    %     \node at (8.4,2.25-0.617*\x-0.15) {$\phi_{\y}$};
    % \node at (8.4,-0.3) {$ \vdots$};

    % \draw[->] (7.2,1.6) -- (8.05,2.1);
    % \draw[->] (7.2,1.6) -- (8.05,1.48);
    
    % \draw[->] (7.2,0.5) -- (8.05,0.87);
    % \draw[->] (7.2,0.5) -- (8.05,0.25);
    
    % \draw[->] (7.2,-1.1) -- (8.05,-1.6);
    % \draw[->] (7.2,-1.1) -- (7.9,-0.98);
    % \draw[->,dashed] (7.4,1.3) ..controls (7.6,1.8) .. (7.3,1.97);
    % \node[] at (7.3,2.2) {\scriptsize$\mathcal{Q}_1$};
    % \draw[->,dashed] (7.4,-0.9) ..controls (7.65,-1.1) .. (7.5,-1.67) ;
    % \node[] at (7.5,-2) {\scriptsize$\mathcal{Q}_K$};
\end{tikzpicture}
% \caption{Network illustration of the described setting. Data assignment, followed by the Map, Shuffle and Reduce steps are depicted, with a symmetric task assignment, as $|\mathcal{Q}_1| = |\mathcal{Q}_2| = \ldots = |\mathcal{Q}_K| = 2$, meaning $Q = 2K$. Note that the transmissions from each node $k$ correspond to different values of $d_{q,k}X_k$. %For example, we can identify the transmissions from node 1 (the lines in dark blue) as $d_{3,1}X_1$, $d_{4,1}X_1$ to node 2 and $d_{Q-1,1}X_1, d_{Q,1}X_1$ to node K.
%}
\caption{Illustration of the WNCS topology in our setting. Each actuator $a_i$ has a neighborhood $\mathcal{N}_{a_i}$ defined by the connected sensor nodes $s_j$. As an example, we have that $\mathcal{N}_{a_1} = \{s_1, s_2,s_p\}$ in the figure.}
\label{fig:network}
\end{figure}
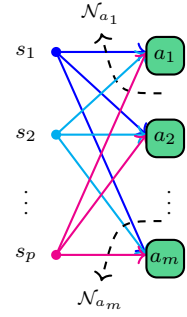

\subsection{Incorporating Over-the-Air Computation}
\label{sec:ota}
We consider an OAC system which consists of multiple receivers and multiple senders, where actuators and sensors are receivers and senders, respectively.  
%\hl{We need to resolve this with the previous comment.}
%
Given the many-to-many information aggregation architecture, we consider multiple time slots $t \in [T]$ for each transmission similar to~\cite{10124016}. Specifically, a sensor $s_j$ at slot $t$ sends
%\vspace{-5pt}
\begin{equation}
    \label{eq:sender}
    p_{jt}y_j[k] = p_{jt}\mathbf{c}_j^{\rm T}\mathbf{x}[k],
\end{equation}
where $\mathbf{c}_j$ is row $j$ of the system output matrix $\mathbf{C}$ and $p_{jt}$ is the precoder of sensor $s_j$ at time slot $t$.
An actuator $a_i$ at slot $t$ receives
\begin{equation}
    r_{it}[k] = \sum_{s_j \in \mathcal{N}_{a_i}}p_{jt}h_{ij}y_j[k] + n_{it}[k],
\end{equation}
where $h_{ij}$ is the channel gain coefficient between sensor $s_j$ and actuator $a_i$. Here $n_{it}[k]$ is the Gaussian noise of the received signal assumed to have a bounded variance, and $h_{ij}$ is assumed to be known or estimated in this setting~\cite{9095231}. Furthermore, we assume that the total transmission time for $T$ slots is less than the sampling period $\delta$ of the control system.

Following the $T$ sets of transmissions, an estimate of the received signal $r_{it}[k]$ is obtained by actuator $a_i$ through decoding as
\begin{equation}
    \label{eq:rhohat}
    \hat{\rho}_i[k] = \sum_{t \in [T]}d_{it}r_{it}[k],
\end{equation}
where $d_{it}$ corresponds to the decoding coefficient of $a_i$ at time slot $t$.
The goal is to estimate the update procedures in \eqref{eq:sender}, namely
\begin{equation}
    \label{eq:rho}
    \rho_i[k] = 
    \sum_{s_j \in \mathcal{N}_{a_i}} g_{ij}y_{j}[k],
\end{equation}
which requires the precoding and decoding procedure to satisfy
\begin{equation}
    \label{eq:unbias}
    \sum_{t \in [T]}p_{jt}d_{it} = \frac{1}{h_{ij}}g_{ij},
\end{equation}
for all sensor-actuator pairs $(a_i, s_j)$.
Additionally, we impose a power constraint $P_j$ on the symbols at each slot $t$ for the sensor $s_j$ as
\begin{equation}
    \label{eq:powconst}
    |p_{jt}|^2 \leq P_j, \: t\in[T].
\end{equation}
Without loss of generality, we can normalize the state impact on the power constraint, absorbing $y_j[k]$ from \eqref{eq:sender} into $P_j$.
\subsection{The Closed Loop System}
The update \eqref{eq:sender} following the OAC procedure becomes
\begin{equation}
    u_{i}[k] = \sum_{s_j \in \mathcal{N}_{a_i}}g_{ij}y_j[k] + \sum_{t\in[T]}d_{it}n_{it}[k],
\end{equation}
and aggregating the values of all actuators at time step $k$ into the vector $\mathbf{u}[k] = [u_1[k] \: u_2[k] \ldots u_m[k]]^{\rm T}$, yields
\begin{equation}
    \mathbf{u}[k] = \mathbf{G}\mathbf{y}[k] + \sum_{t\in [T]}\mathbf{D}_t\mathbf{n}_t[k],
\end{equation}
%\vspace{-1pt}
where $\mathbf{G} \in \mathbb{R}^{m \times p}$ and $\mathbf{D}_t \in \mathbb{R}^{m \times m}$ is a diagonal matrix comprising the decoding coefficients $d_{it}$ in entry $(i,i)$.
In the above equations, for all $a_i \in \mathcal{V}$, we have $g_{ij} = 0$ if $s_j \notin \mathcal{N}_{a_i}$. Therefore, the matrix $\mathbf{G}$ is structured with respect to the network topology.
The closed loop system in \eqref{eq:plant} becomes
\begin{equation}
    \label{eq:closed loop}
    \mathbf{x}[k+1] = \hat{\mathbf{A}}\mathbf{x}[k] + \mathbf{B}\hat{\mathbf{n}}[k],
\end{equation}
where
\begin{equation}
    %\begin{aligned}
    \hat{\mathbf{A}} = (\mathbf{A}+\mathbf{B}\mathbf{G}\mathbf{C}), \:
    \hat{\mathbf{n}}[k] = \sum_{t \in [T]}\mathbf{D}_t\mathbf{n}_t[k].
    %\end{aligned}
\end{equation}
\vspace{-1.5pt}
In the following section, we address the stabilization of \eqref{eq:closed loop} under the OAC approach that we propose.
% Successful OtA computation then becomes to minimize the mean squared error (MSE) of the estimated updates $\hat{\rho_i}$ and the true updates $\rho_i$ subject to the unbiased condition \eqref{eq:unbias} and power constraint \eqref{eq:powconst}.

\section{Structured Robust Control and OAC by Matrix Factorization}
\label{sec:mainresults}
\subsection{Performance and Stability of the Closed Loop System}
In this section, we look at the stabilization of the closed-loop system \eqref{eq:closed loop} under a suitable performance metric that quantifies the system's response to disturbances, in this case $\hat{\mathbf{n}}[k]$. A commonly used metric is the energy-to-energy gain $\gamma_{ee}$~\cite{unifiedalgebraic}, which in our case takes the form,
\begin{equation}
    \gamma_{ee} = \sup_{\|\hat{\mathbf{n}}\|_{\ell_2}\leq 1}\ \|\mathbf{x}\|_{\ell_2},
\end{equation}
where $\|\mathbf{x}\|_{\ell_2} = \sqrt{\sum_{k=0}^{\infty}\|\mathbf{x}[k]\|_2^2}$.
This metric characterizes the worst-case amplification of the disturbance $\hat{\mathbf{n}}[k]$ through the system. This is a suitable metric since the decoding coefficients are not known \textit{a priori} and affect the impact of the noise on the control system.

% An equivalent representation of the energy-to-energy gain is the so-called $\mathcal{H}_{\infty}$-norm  of the transfer function $\mathbf{T}(z)$ associated with \eqref{eq:closed loop}, namely $\|\mathbf{T}(z)\|_{\mathcal{H}_{\infty}}$, where
% \begin{equation}
%     \mathbf{T}(z) = (z\mathbf{I} - \hat{\mathbf{A}})^{-1}\hat{\mathbf{B}}.
% \end{equation}
% %the energy-to-energy gain is equivalently represented as the $\mathcal{H}_{\infty}$-norm of $\mathbf{T}(z)$. In particular, we haveW
To construct a network that minimizes $\gamma_{ee}$ for \eqref{eq:closed loop}, we use the following established result:
% To do so, we extend the system \eqref{eq:closed loop} as follows
% \begin{equation}
%     \label{eq:closed loop2}
%     \begin{aligned}
%     \hat{\mathbf{x}}[k+1] &= \hat{\mathbf{A}}\hat{\mathbf{x}} + \hat{\mathbf{B}}\hat{\mathbf{n}}\\
%     \hat{\mathbf{y}}[k] &= \mathbf{I}\hat{\mathbf{x}}[k], 
%     \end{aligned}
% \end{equation}
% where the role of $\mathbf{I}$ comes from 
% the following relationship between the $\mathcal{H}_{\infty}$-norm of $\mathbf{T}(z)$ and the so called energy-to-energy gain of \eqref{eq:closed loop},
% \begin{equation}
%     \|\mathbf{T}(z)\|_{\mathcal{H}_\infty} = \sup_{\|\hat{\mathbf{n}}\|_{\ell_2} \leq 1}\|\mathbf{I}\hat{\mathbf{x}}\|_{\ell_2},
% \end{equation}
% where $\|x\|_{\ell_2}=\sqrt{\sum_{k=0}^{\infty}\|x[k]\|^2}$. 
%In other words, the choice of $\mathbf{I}$ influences which sub-state of $\hat{\mathbf{x}}$ 
%From~\cite{1272453}, we have the following result,
\begin{lemma}~\cite{unifiedalgebraic}
    \label{thm:1}
    For a stable system \eqref{eq:closed loop}, and scalar $\gamma > 0$, the energy-to-energy gain satisfies
    $\gamma_{ee} < \gamma$ if and only if there exists a symmetric, positive-definite $\mathcal{X}$ such that
    \begin{equation}
        \label{eq:thm1ineq}
        \begin{bmatrix}
            \mathcal{X} & \mathbf{0}\\
            \mathbf{0} & \gamma^2 \mathbf{I}
        \end{bmatrix} \succ \begin{bmatrix}\hat{\mathbf{A}} &\mathbf{B}\\
                            \mathbf{I} & \mathbf{0}
        \end{bmatrix}
        \begin{bmatrix}
            \mathcal{X} & \mathbf{0}\\
            \mathbf{0} & \mathbf{I} 
        \end{bmatrix}
        \begin{bmatrix}
            \hat{\mathbf{A}}^{\rm T} & \mathbf{I}\\
           \mathbf{B}^{\rm T} & \mathbf{0}
        \end{bmatrix}.
    \end{equation}
\end{lemma}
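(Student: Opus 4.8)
The plan is to recognize \eqref{eq:thm1ineq} as the dual (controllability-Gramian) form of the discrete-time bounded real lemma, specialized to the case in which the regulated output equals the full state. First I would make the frequency-domain meaning of $\gamma_{ee}$ explicit: the standing hypothesis that \eqref{eq:closed loop} is stable means $\hat{\mathbf A}$ is Schur, so with $\mathbf x[0]=\mathbf 0$ the map $\hat{\mathbf n}\mapsto\mathbf x$ is LTI with transfer function $\mathbf G(z)=(z\mathbf I-\hat{\mathbf A})^{-1}\mathbf B$, which lies in $\mathcal H_\infty$ of the unit disk. By Parseval's identity $\|\mathbf x\|_{\ell_2}^2=\tfrac{1}{2\pi}\int_0^{2\pi}\|\mathbf G(e^{j\theta})\hat{\mathbf n}(e^{j\theta})\|_2^2\,d\theta$, so maximizing over $\|\hat{\mathbf n}\|_{\ell_2}\le 1$ gives $\gamma_{ee}=\|\mathbf G\|_{\mathcal H_\infty}=\sup_{\theta}\sigma_{\max}(\mathbf G(e^{j\theta}))$. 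The lemma is therefore equivalent to the assertion that $\|\mathbf G\|_{\mathcal H_\infty}<\gamma$ if and only if \eqref{eq:thm1ineq} is feasible for some symmetric $\mathcal X\succ\mathbf 0$.

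Since a matrix and its transpose have the same singular values, $\|\mathbf G\|_{\mathcal H_\infty}=\|\mathbf G^{\rm T}\|_{\mathcal H_\infty}$, and $\mathbf G^{\rm T}(z)=\mathbf B^{\rm T}(z\mathbf I-\hat{\mathbf A}^{\rm T})^{-1}$ is realized by the quadruple $(\hat{\mathbf A}^{\rm T},\mathbf I,\mathbf B^{\rm T},\mathbf 0)$. I would then invoke the standard discrete-time bounded real lemma for this realization: $\|\mathbf G^{\rm T}\|_{\mathcal H_\infty}<\gamma$ holds if and only if there is a symmetric $P\succ\mathbf 0$ with $\left[\begin{smallmatrix}\hat{\mathbf A}P\hat{\mathbf A}^{\rm T}-P+\mathbf B\mathbf B^{\rm T} & \hat{\mathbf A}P\\ P\hat{\mathbf A}^{\rm T} & P-\gamma^2\mathbf I\end{smallmatrix}\right]\prec\mathbf 0$. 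Multiplying this by $-1$ and moving the leading block to the other side, and setting $\mathcal X:=P$, gives
\begin{equation*}
    \begin{bmatrix}\mathcal X & \mathbf 0\\ \mathbf 0 & \gamma^2\mathbf I\end{bmatrix}\succ\begin{bmatrix}\hat{\mathbf A}\mathcal X\hat{\mathbf A}^{\rm T}+\mathbf B\mathbf B^{\rm T} & \hat{\mathbf A}\mathcal X\\ \mathcal X\hat{\mathbf A}^{\rm T} & \mathcal X\end{bmatrix},
\end{equation*}
and a one-line block multiplication shows that the right-hand side here is exactly the triple product appearing in \eqref{eq:thm1ineq}; this establishes the equivalence.

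The only nontrivial ingredient is the bounded real lemma invoked above, which is the result cited in the statement; for a self-contained proof the two directions separate cleanly. Sufficiency (``if'') is a dissipation argument: feasibility of \eqref{eq:thm1ineq} furnishes a quadratic storage function for \eqref{eq:closed loop} (namely $\mathbf x^{\rm T}\mathcal X^{-1}\mathbf x$, equivalently $\mathbf x^{\rm T}P\mathbf x$ for the transposed realization), and telescoping its one-step decrease over $k\ge 0$ with $\mathbf x[0]=\mathbf 0$ yields $\sum_k\|\mathbf x[k]\|_2^2<\gamma^2\sum_k\|\hat{\mathbf n}[k]\|_2^2$. Necessity (``only if'') is the hard part, and the step I expect to be the main obstacle: from the strict norm bound one must produce a certifying $\mathcal X$, which is done either through the stabilizing solution of the discrete algebraic Riccati equation associated with $(\hat{\mathbf A}^{\rm T},\mathbf I,\mathbf B^{\rm T},\mathbf 0)$, followed by a small perturbation to obtain the strict inequality, or through an S-procedure/convex-duality argument on the indefinite form $\gamma^2\|\hat{\mathbf n}\|_{\ell_2}^2-\|\mathbf x\|_{\ell_2}^2$ over the trajectories of \eqref{eq:closed loop}. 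Establishing the existence and required properties of that Riccati solution under the strict $\mathcal H_\infty$ bound carries essentially all the difficulty; the transposition and the sign manipulation above are routine bookkeeping.
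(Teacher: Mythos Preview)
The paper does not prove this lemma; it is quoted as an established result from the cited reference and used as a black box, so there is no ``paper's own proof'' to compare against. Your sketch is correct and is the standard route: identifying $\gamma_{ee}$ with the $\mathcal{H}_\infty$ norm of $(z\mathbf I-\hat{\mathbf A})^{-1}\mathbf B$ via Parseval, passing to the transposed realization $(\hat{\mathbf A}^{\rm T},\mathbf I,\mathbf B^{\rm T},\mathbf 0)$ to land in the controllability-Gramian form, and then invoking the discrete-time bounded real lemma reproduces \eqref{eq:thm1ineq} exactly, as your block multiplication verifies. Your assessment of where the work lies is also accurate: sufficiency is a one-step dissipation/telescoping argument, while necessity requires either the stabilizing DARE solution (with a perturbation for strictness) or the KYP/S-procedure machinery, which is precisely what the cited reference supplies.
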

% Note that \eqref{eq:thm1ineq} is equivalent to the condition
% \begin{equation}
%     \label{eq:schur}
%     \begin{bmatrix}
%         \mathcal{X} & \mathbf{0} & \hat{\mathbf{A}} & \mathbf{B}\\
%         \mathbf{0} & \gamma^2 \mathbf{I} & \mathbf{I} & \mathbf{0}\\
%         \hat{\mathbf{A}}^{\rm T} & \mathbf{I} & \mathcal{X}^{-1} & \mathbf{0}\\
%         \mathbf{B}^{\rm T} & \mathbf{0} & \mathbf{0} & \mathbf{I}
%     \end{bmatrix} \succ \mathbf{0},
% \end{equation}
% by Schur complements. 
Since the matrix $\mathbf{G}$ is structured with respect to the network topology, we constrain $\mathbf{G}$ to a fixed sparsity pattern. However, we can consider any convex constraint set $\mathcal{D}_{\mathbf{G}}$ for the matrix $\mathbf{G}$. Using Schur-complements on the condition in \eqref{eq:thm1ineq}, we seek to solve the optimization problem
\begin{equation}
    \label{eq:hinf}
    \begin{aligned}
        &\min_{\gamma, \mathcal{X}, \mathbf{G}} \quad \gamma^2\\
        &\text{s.t.}    \begin{bmatrix}
        \mathcal{X} & \mathbf{0} & \hat{\mathbf{A}} & \hat{\mathbf{B}}\\
        \mathbf{0} & \gamma^2 \mathbf{I} & \mathbf{I} & \mathbf{0}\\
        \hat{\mathbf{A}}^{\rm T} & \mathbf{I} & \mathcal{X}^{-1} & \mathbf{0}\\
        \hat{\mathbf{B}}^{\rm T} & \mathbf{0} & \mathbf{0} & \mathbf{I}
    \end{bmatrix} \succ \mathbf{0}, \: \mathcal{X} \succ \mathbf{0},\\
    \vspace{5pt}
    & \mathbf{G} \in \mathcal{D}_{\mathbf{G}},
    \end{aligned}
\end{equation}
where we suppress the dependency $\hat{\mathbf{A}} = \hat{\mathbf{A}}(\mathbf{G})$ for compactness. Problem \eqref{eq:hinf} is \textit{non-convex}, and a common approach to address the non-convexity is to linearize the problematic entry $\mathcal{X}^{-1}$~\cite{1272453,914229,618250}. One can then use the following iterative procedure to solve the resulting linearized formulation of \eqref{eq:hinf},
\begin{equation}
    \label{eq:relaxedh}
    \begin{aligned}
    \mathcal{X}_{k+1} &= \argmin_{\gamma, \mathcal{X}, \mathbf{G}} \quad \gamma^2\\
    &\text{s.t.}
    \begin{bmatrix}
        \mathcal{X} & \mathbf{0} & \hat{\mathbf{A}} & \hat{\mathbf{B}}\\
        \mathbf{0} & \gamma^2 \mathbf{I} & \mathbf{I} & \mathbf{0}\\
        \hat{\mathbf{A}}^{\rm T} & \mathbf{I} & L(\mathcal{X}^{-1},\mathcal{X}_k) & \mathbf{0}\\
        \hat{\mathbf{B}}^{\rm T} & \mathbf{0} & \mathbf{0} & \mathbf{I}
    \end{bmatrix} \succ \mathbf{0}, \mathcal{X} \succ \mathbf{0},\\
    \vspace{5pt}
    & \mathbf{G}\in \mathcal{D}_{\mathbf{G}},
    \end{aligned}
    \end{equation}
where $L(\mathcal{X}^{-1},\mathcal{X}_k) = \mathcal{X}_k^{-1} - \mathcal{X}_k^{-1}(\mathcal{X}-\mathcal{X}_k)\mathcal{X}_k^{-1}$ is the linearization of $\mathcal{X}^{-1}$ at the point $\mathcal{X}_k$. Here \eqref{eq:relaxedh} constitutes a sequence of \textit{convex} problems, and the following theorem establishes the convergence of the iterates,
\begin{theorem}    
\label{theorem:1}
    Let $\{\gamma_{k}, \mathcal{X}_{k},\mathbf{G}_k\}_{k=0}^{\infty}$ be a sequence generated by \eqref{eq:relaxedh} from a feasible initial point. Then every accumulation point of the sequence is a stationary point of \eqref{eq:hinf}.
\end{theorem}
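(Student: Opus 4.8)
The plan is to recognize \eqref{eq:relaxedh} as an instance of the convex inner-approximation (sequential convexification) scheme and to run the classical Marks--Wright-style argument: each subproblem minimizes the \emph{true} objective $\gamma^2$ over a \emph{restriction} of the feasible set of \eqref{eq:hinf}, this restriction is tight to first order at the current iterate, and consequently every accumulation point satisfies the KKT system of \eqref{eq:hinf}.

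First I would record the two structural facts that make the scheme work, both consequences of the operator convexity of $\mathcal{X}\mapsto\mathcal{X}^{-1}$ on the positive-definite cone. (i) For any $\mathcal{X},\mathcal{X}_k\succ\mathbf{0}$ one has $\mathcal{X}^{-1}\succeq L(\mathcal{X}^{-1},\mathcal{X}_k)$, with equality at $\mathcal{X}=\mathcal{X}_k$; moreover $L(\mathcal{X}^{-1},\mathcal{X}_k)$ is exactly the first-order Taylor expansion of $\mathcal{X}^{-1}$ at $\mathcal{X}_k$, so its differential at $\mathcal{X}_k$ also coincides with that of $\mathcal{X}^{-1}$. (ii) Since the block matrix in \eqref{eq:relaxedh} differs from the one in \eqref{eq:hinf} only in the $(3,3)$ block, fact (i) shows the former is dominated in the L\"owner order by the latter; hence any $(\gamma,\mathcal{X},\mathbf{G})$ feasible for the subproblem linearized at $\mathcal{X}_k$ is feasible for \eqref{eq:hinf}, and, evaluating at $\mathcal{X}=\mathcal{X}_k$, the iterate $(\gamma_k,\mathcal{X}_k,\mathbf{G}_k)$ is itself feasible for the next subproblem $P(\mathcal{X}_k)$. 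An induction starting from the feasible initial point then gives that every iterate is feasible for \eqref{eq:hinf} and that $\gamma_{k+1}^2\le\gamma_k^2$; being bounded below, $\{\gamma_k^2\}$ converges to some $\gamma_\infty^2$.

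Next, let $(\gamma^\star,\mathcal{X}^\star,\mathbf{G}^\star)$ be an accumulation point, the limit of a subsequence $\{(\gamma_{k_j},\mathcal{X}_{k_j},\mathbf{G}_{k_j})\}$; by monotone convergence $\gamma^{\star 2}=\gamma_\infty^2$ and also $\gamma_{k_j+1}^2\to\gamma_\infty^2$. Writing $P(\mathcal{Y})$ for the convex subproblem obtained by linearizing at $\mathcal{Y}$, I would show that $(\gamma^\star,\mathcal{X}^\star,\mathbf{G}^\star)$ is optimal for $P(\mathcal{X}^\star)$: it is feasible for $P(\mathcal{X}^\star)$ with value $\gamma^{\star 2}$ by fact (i) (equality at the linearization point), and if some $(\bar\gamma,\bar{\mathcal{X}},\bar{\mathbf{G}})$ were strictly feasible for $P(\mathcal{X}^\star)$ with $\bar\gamma^2<\gamma^{\star 2}$, then by continuity of $\mathcal{Y}\mapsto L(\bar{\mathcal{X}}^{-1},\mathcal{Y})$ that point stays feasible for $P(\mathcal{X}_{k_j})$ for $j$ large, forcing $\gamma_{k_j+1}^2\le\bar\gamma^2<\gamma^{\star 2}$ and contradicting $\gamma_{k_j+1}^2\to\gamma^{\star 2}$. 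Optimality of $(\gamma^\star,\mathcal{X}^\star,\mathbf{G}^\star)$ for the convex problem $P(\mathcal{X}^\star)$ yields KKT multipliers (under a Slater-type constraint qualification for the SDP); and since the constraint data of $P(\mathcal{X}^\star)$ agree with those of \eqref{eq:hinf} in value and first derivative at $\mathcal{X}^\star$ while the objectives are identical, the same multipliers certify the KKT conditions of \eqref{eq:hinf} at $(\gamma^\star,\mathcal{X}^\star,\mathbf{G}^\star)$, i.e., it is a stationary point.

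I expect the main obstacle to be the technicalities around the strict matrix inequalities: to use $\mathcal{X}^{\star -1}$ and the operator-convexity tangent at $\mathcal{X}^\star$ one needs $\mathcal{X}^\star\succ\mathbf{0}$ (the limit must not degenerate), and the perturbation step in the optimality claim needs a strictly feasible competitor for $P(\mathcal{X}^\star)$ together with a constraint qualification for the subproblems. These are handled in the usual way, by passing to the closed/regularized version of \eqref{eq:hinf} (replacing $\succ\mathbf{0}$ by $\succeq\epsilon\mathbf{I}$) and, if needed, confining the iterates to a compact set on which the linearization map is uniformly continuous; with those adjustments the remaining steps are exactly as above.
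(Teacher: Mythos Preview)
Your proposal is correct and follows the same skeleton as the paper: both proofs rely on the operator convexity of $\mathcal{X}\mapsto\mathcal{X}^{-1}$ to show that $L(\mathcal{X}^{-1},\mathcal{X}_k)\preceq\mathcal{X}^{-1}$, conclude that every subproblem iterate is feasible for \eqref{eq:hinf}, and obtain the monotone decrease $\gamma_{k+1}\le\gamma_k$.

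Where the two diverge is in the last step. The paper's proof stops after establishing monotonicity: it argues that $\gamma_k$ is nonincreasing and bounded below, hence convergent, and leaves the connection to stationarity implicit. Your proposal actually completes this link via the standard Marks--Wright inner-approximation argument: you show that any accumulation point $(\gamma^\star,\mathcal{X}^\star,\mathbf{G}^\star)$ is optimal for the convex subproblem $P(\mathcal{X}^\star)$ linearized at itself, and then transfer the resulting KKT multipliers to \eqref{eq:hinf} using the first-order tangency of $L(\cdot,\mathcal{X}^\star)$ with $\mathcal{X}^{-1}$ at $\mathcal{X}^\star$. This is genuinely the missing ingredient---monotone convergence of the objective alone does not imply stationarity without such an argument---so your version is the more complete one. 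You are also right to flag the technical caveats (positive-definiteness of $\mathcal{X}^\star$, constraint qualification, strict vs.\ nonstrict LMIs); the paper does not address these either.
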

\begin{proof}
See Appendix \ref{appendix:1}.
\end{proof}
\begin{rem}
Problem \eqref{eq:relaxedh} is solved \textit{offline}. Techniques such as in~\cite{618250} can be used to initialize \eqref{eq:relaxedh}.
% An important observation is that Lemma \ref{thm:1} requires the closed loop system \eqref{eq:closed loop} to be asymptotically stable. In other words, the iterative procedure given by \eqref{eq:relaxedh} needs to start at a point where $|\lambda_i|<1$ for $\lambda_i \in \sigma(\hat{\mathbf{A}})$. If no such point exists, then \eqref{eq:hinf} does not have a solution. See e.g.,~\cite{6481631} for sufficient conditions on the structured set $\mathcal{D}_{\mathbf{G}}$ guaranteeing the existence of a Schur matrix $\hat{\mathbf{A}}$. 
% For an algorithm to stabilize \eqref{eq:closed loop} in order to initialize \eqref{eq:relaxedh}, see e.g.,~\cite{1272453},~\cite{1576856}. \hl{Clarify this last sentence with Seif.}
\end{rem}
\vspace{-5pt}
%{\color{red} Text about decoupling the OAC and Control System}
\subsection{Optimal Over-the-Air Computation}
% To determine how the stability of the system is achieved with the proposed OAC computation framework presented in \ref{sec:ota}, we cast the internode Ota computation as an optimization problem. Formally, we want to solve the following problem:
After obtaining $\mathbf{G}$ from \eqref{eq:optprob}, we look for the optimal precoders and decoders that minimize the error between \eqref{eq:rho} and the estimate \eqref{eq:rhohat} subject to the constraints \eqref{eq:unbias} and \eqref{eq:powconst}. Formally, we seek to solve the following optimization problem,
\begin{equation}
    \label{eq:prob1}
    \begin{aligned}
    &\min \sum_{a_i \in \mathcal{V}}^{}\:(1/2)\mathbb{E}[|\hat{\rho}_i - \rho_i|^2]\\
    &\text{s.t. }\sum_{t \in [T]}p_{jt}d_{it} = \frac{g_{ij}}{h_{ij}}, \forall(a_i,s_j) \in \mathcal{E}\\
    &\quad \:\: \:|p_{jt}|^2 \leq P_j , \forall s_j, t \in [T].
    \end{aligned}
\end{equation}
The expected value can be equivalently represented as,
\begin{equation}
    \mathbb{E}[|\hat{\rho}_i - \rho_i|^2] = \sigma^2\sum\nolimits_{t\in[T]}|d_{it}|^2,
\end{equation}
for $n_{it}[k] \in \mathcal{N}(0,\sigma^2)$.
We define the following matrices of precoding and decoding vectors,
\begin{equation}
    \begin{aligned}
    \mathbf{P} &:= [\mathbf{p}_1, \ldots,\mathbf{p}_{p}] \in \mathbb{R}^{T\times p}\\
    \mathbf{D} &:= [\mathbf{d}_1, \ldots, \mathbf{d}_{m}] \in \mathbb{R}^{T\times m},
    % \mathbf{G} &:= [\mathbf{g}_1, \ldots, \mathbf{g}_p] \in \mathbb{C}^{m \times p}
    \end{aligned}
\end{equation}
where the vectors $\mathbf{p}_{j}$ and $\mathbf{d}_{i}$ contain the $T$ precoding and decoding coefficients for connected sensor-actuator pairs $(s_j, a_i)$.

Using these definitions, problem \eqref{eq:prob1} can be expressed as a matrix factorization problem
\begin{equation}
    \label{eq:optprob}
    \begin{aligned}
    &\min_{\mathbf{P},\mathbf{D}} \:(1/2)\|\mathbf{D}\|_{\rm F}^2\\
    &\text{s.t. } (\mathbf{G} \odot \mathbf{H}^{-1})^{\rm T} = \mathbf{P}^{\rm T}\mathbf{D}\\
    &\quad \: \: \: \|\mathbf{p}_j\|_2^2 \leq P_j, \: j=1,\ldots,p,
    \end{aligned}
\end{equation}
where $(\mathbf{H})_{ij} = h_{ij}$ contains the channel coefficients and $\odot$ denotes the Hadamard product. \begin{rem}
    The matrix factorization problem in \eqref{eq:optprob} is \textit{always} feasible for $P_j >0, h_{ij} \neq 0$ and $T \geq \text{rank}(\mathbf{G})$.
\end{rem}
Despite the non-convexity of \eqref{eq:optprob} due to the equality constraint, the problem is bi-convex, which motivates an alternating optimization approach, as the subproblems in $\mathbf{P}$ and $\mathbf{D}$ are convex. Moreover, ADMM-based methods have been shown to perform well in nonconvex settings with exploitable structure \cite{sindri}, which motivates our use of a modified ADMM approach to handle the constraints and leverage the biconvexity of \eqref{eq:optprob}.
% This observation motivates an approach based on alternating optimization to solve \eqref{eq:optprob}, since the corresponding subproblems in variables $\mathbf{D}$ and $\mathbf{E}$ are convex. As Alternating Direction Method of Multipliers (ADMM)-based methods have been shown to perform well in nonconvex settings with exploitable structure \cite{sindri},
% %In the next section, we exploit this property to derive a convergent algorithm. 
% %\subsection{Alternating Optimization Algorithm}
% %Due to the bi-convexity of \eqref{eq:optprob}, we will 
% we use a modified ADMM-based approach to exploit the property of biconvexity while simultaneously satisfying the constraints. 
Specifically, we define a modified augmented Lagrangian $\hat{\mathcal{L}}_{\tau^{(k)}}$ corresponding to \eqref{eq:optprob} as
\begin{equation}
    \label{eq:modaug}
    \begin{aligned}
    \hat{\mathcal{L}}_{\tau^{(k)}}(\mathbf{D},\mathbf{P},\mathbf{\Lambda}) &:= \mathcal{L}_{\tau^{(k)}}(\mathbf{D},\mathbf{P},\mathbf{\Lambda}) + (\alpha/2)\|\mathbf{D}^{(k)}-\mathbf{D}\|_{\rm F}^2\\
    &+(\beta/2)\|\mathbf{P}^{(k)}-\mathbf{P}\|_{\rm F}^2,
    \end{aligned}
\end{equation}
where 
\begin{equation}
    \label{eq:aug}
    \begin{aligned}
    \mathcal{L}_{\tau^{(k)}}(\mathbf{D},\mathbf{P},\mathbf{\Lambda}) &:= (1/2)\|\mathbf{D}\|_{\rm F}^2 - \|\mathbf{\Lambda}\|_{\rm F}^2/(2\tau^{(k)}) + \mathbf{I}_{\Omega}(\mathbf{P})\\&+ (\tau^{(k)}/2)\|(\mathbf{G} \odot \mathbf{H}^{-1})^{\rm T} - \mathbf{P}^{\rm T}\mathbf{D} + \mathbf{\Lambda}/\tau^{(k)}\|_{\rm F}^2.
    \end{aligned}
\end{equation}
Here $\mathbf{\Lambda}$ is a dual variable, the indicator function $\mathbf{I}_{\Omega}(\mathbf{P})$ corresponds to the set $\Omega=\{\mathbf{P}:\|\mathbf{p}_j\|_2^2\leq P_j\}$ and $\tau^{(k)}$ is the step size at iteration $k$. Furthermore, we have introduced two proximal terms in \eqref{eq:modaug} with $\alpha,\beta > 0$. We summarize the steps for our modified ADMM-algorithm in Algorithm \ref{alg:admm}, with convergence and step-size rule given by the following theorem
%Note that we have omitted the power constraint in \eqref{eq:aug} as this constraint will be handled implicitly in the iterations. The procedure is depicted in \ref{alg:admm}.

\begin{theorem}

\label{theorem:2}
The iterates of Algorithm \ref{alg:admm} generate a sequence $\{\mathbf{D}^{(k)}, \mathbf{P}^{(k)}, \mathbf{\Lambda}^{(k)}\}_{k=0}^{\infty}$ that converges to a stationary point of \eqref{eq:optprob} if $\|\mathbf{\Lambda}^{(k)}\|_{\rm F} \leq M$ and $\tau^{(k+1)}/\tau^{(k)} \leq C$ for some constants $C,M$ and all $k$, and $\sum_{k=1}^\infty\frac{1}{\tau^{(k)}} < \infty$.
\end{theorem}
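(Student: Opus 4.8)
The plan is to treat Algorithm~\ref{alg:admm} as an inexact descent method on the augmented Lagrangian of \eqref{eq:aug}. Note first that, after expanding the square and cancelling the $-\|\mathbf{\Lambda}\|_{\rm F}^2/(2\tau^{(k)})$ term, \eqref{eq:aug} is the ordinary augmented Lagrangian
\[
\mathcal{L}_{\tau}(\mathbf{D},\mathbf{P},\mathbf{\Lambda}) = \tfrac12\|\mathbf{D}\|_{\rm F}^2 + \mathbf{I}_{\Omega}(\mathbf{P}) + \langle\mathbf{\Lambda},\mathbf{R}-\mathbf{P}^{\rm T}\mathbf{D}\rangle + \tfrac{\tau}{2}\|\mathbf{R}-\mathbf{P}^{\rm T}\mathbf{D}\|_{\rm F}^2,\quad \mathbf{R}:=(\mathbf{G}\odot\mathbf{H}^{-1})^{\rm T},
\]
with primal residual $\mathbf{r}^{(k)}:=\mathbf{R}-(\mathbf{P}^{(k)})^{\rm T}\mathbf{D}^{(k)}$. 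Since \eqref{eq:optprob} is bi-convex and $\Omega$ is convex, the $\mathbf{D}$- and $\mathbf{P}$-subproblems are convex, and the proximal terms with moduli $\alpha,\beta>0$ make them strongly convex, so the block updates are well defined and the $\mathbf{P}$-iterates stay in $\Omega$. The plan then has three steps: (i) a sufficient-decrease estimate for $\mathcal{L}_{\tau^{(k)}}$; (ii) a uniform lower bound for $\mathcal{L}_{\tau^{(k)}}$, which forces the block displacements and the residual to vanish and keeps the iterates bounded; (iii) passage to the limit in the optimality conditions of the two subproblems along a convergent subsequence.

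For (i), minimality of $\mathbf{D}^{(k+1)}$ and then of $\mathbf{P}^{(k+1)}$ for their strongly convex subproblems (the $\mathbf{P}$-proximal term is constant in the $\mathbf{D}$-step and vice versa) gives
\[
\mathcal{L}_{\tau^{(k)}}(\mathbf{D}^{(k)},\mathbf{P}^{(k)},\mathbf{\Lambda}^{(k)}) - \mathcal{L}_{\tau^{(k)}}(\mathbf{D}^{(k+1)},\mathbf{P}^{(k+1)},\mathbf{\Lambda}^{(k)}) \ \ge\ \tfrac{\alpha}{2}\|\Delta\mathbf{D}^{(k)}\|_{\rm F}^2 + \tfrac{\beta}{2}\|\Delta\mathbf{P}^{(k)}\|_{\rm F}^2,
\]
with $\Delta\mathbf{D}^{(k)}:=\mathbf{D}^{(k+1)}-\mathbf{D}^{(k)}$, $\Delta\mathbf{P}^{(k)}:=\mathbf{P}^{(k+1)}-\mathbf{P}^{(k)}$. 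The dual step $\mathbf{\Lambda}^{(k+1)}=\mathbf{\Lambda}^{(k)}+\tau^{(k)}\mathbf{r}^{(k+1)}$ raises $\mathcal{L}_{\tau^{(k)}}$ by $\tau^{(k)}\|\mathbf{r}^{(k+1)}\|_{\rm F}^2 = \|\mathbf{\Lambda}^{(k+1)}-\mathbf{\Lambda}^{(k)}\|_{\rm F}^2/\tau^{(k)} \le 4M^2/\tau^{(k)}$ (using $\|\mathbf{\Lambda}^{(k)}\|_{\rm F}\le M$), and replacing $\tau^{(k)}$ by $\tau^{(k+1)}$ raises it by $\tfrac12(\tau^{(k+1)}-\tau^{(k)})\|\mathbf{r}^{(k+1)}\|_{\rm F}^2\le 2CM^2/\tau^{(k)}$ (using $\tau^{(k+1)}\le C\tau^{(k)}$ and $\|\mathbf{r}^{(k+1)}\|_{\rm F}\le 2M/\tau^{(k)}$). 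With $\Phi^{(k)}:=\mathcal{L}_{\tau^{(k)}}(\mathbf{D}^{(k)},\mathbf{P}^{(k)},\mathbf{\Lambda}^{(k)})$ these combine to $\Phi^{(k+1)}\le \Phi^{(k)} - \tfrac{\alpha}{2}\|\Delta\mathbf{D}^{(k)}\|_{\rm F}^2 - \tfrac{\beta}{2}\|\Delta\mathbf{P}^{(k)}\|_{\rm F}^2 + c/\tau^{(k)}$ with $c:=(4+2C)M^2$.

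For (ii), completing the square in the penalty gives $\mathcal{L}_{\tau}(\mathbf{D},\mathbf{P},\mathbf{\Lambda})\ge\tfrac12\|\mathbf{D}\|_{\rm F}^2-\|\mathbf{\Lambda}\|_{\rm F}^2/(2\tau)$ for $\mathbf{P}\in\Omega$, and since $\sum_k 1/\tau^{(k)}<\infty$ forces $\tau^{(k)}\to\infty$ (hence the $\tau^{(k)}$ are bounded away from $0$), $\Phi^{(k)}$ is bounded below. Telescoping the estimate from (i) with $\sum_k 1/\tau^{(k)}<\infty$ then shows $\Phi^{(k)}$ converges and $\sum_k(\|\Delta\mathbf{D}^{(k)}\|_{\rm F}^2+\|\Delta\mathbf{P}^{(k)}\|_{\rm F}^2)<\infty$, so $\Delta\mathbf{D}^{(k)}\to\mathbf 0$ and $\Delta\mathbf{P}^{(k)}\to\mathbf 0$; boundedness of $\Phi^{(k)}$ from above makes $\{\mathbf{D}^{(k)}\}$ bounded, while $\{\mathbf{P}^{(k)}\}\subseteq\Omega$ and $\{\mathbf{\Lambda}^{(k)}\}$ are bounded by construction and hypothesis, so accumulation points exist; and $\|\mathbf{r}^{(k+1)}\|_{\rm F}\le 2M/\tau^{(k)}\to 0$, so every accumulation point is feasible for \eqref{eq:optprob}. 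For (iii), I would take a subsequence along which $(\mathbf{D}^{(k)},\mathbf{P}^{(k)})\to(\mathbf{D}^\star,\mathbf{P}^\star)$ and, refining, $\mathbf{\Lambda}^{(k+1)}\to\mathbf{\Lambda}^\star$. Using $\mathbf{\Lambda}^{(k+1)}=\mathbf{\Lambda}^{(k)}+\tau^{(k)}\mathbf{r}^{(k+1)}$, the $\mathbf{P}$-step optimality reads $\mathbf{D}^{(k+1)}(\mathbf{\Lambda}^{(k+1)})^{\rm T}-\beta\,\Delta\mathbf{P}^{(k)}\in N_{\Omega}(\mathbf{P}^{(k+1)})$, and the $\mathbf{D}$-step optimality rearranges to $\mathbf{D}^{(k+1)} = \mathbf{P}^{(k)}\mathbf{\Lambda}^{(k+1)} + \tau^{(k)}\mathbf{P}^{(k)}(\Delta\mathbf{P}^{(k)})^{\rm T}\mathbf{D}^{(k+1)} - \alpha\,\Delta\mathbf{D}^{(k)}$. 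Passing to the limit (closed graph of $N_{\Omega}$, and $\Delta\mathbf{P}^{(k)},\Delta\mathbf{D}^{(k)}\to\mathbf 0$) yields $\mathbf{D}^\star(\mathbf{\Lambda}^\star)^{\rm T}\in N_{\Omega}(\mathbf{P}^\star)$ and $\mathbf{D}^\star=\mathbf{P}^\star\mathbf{\Lambda}^\star$, which together with feasibility are exactly the KKT conditions of \eqref{eq:optprob}.

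The step I expect to be the main obstacle is the last limit: driving the penalty-scaled cross term $\tau^{(k)}\mathbf{P}^{(k)}(\Delta\mathbf{P}^{(k)})^{\rm T}\mathbf{D}^{(k+1)}$ to zero, i.e. establishing $\tau^{(k)}\|\Delta\mathbf{P}^{(k)}\|_{\rm F}\to 0$, since a diverging penalty can a priori amplify an otherwise vanishing displacement (this is also what reconciles $\mathbf{\Lambda}^{(k)}$ with $\mathbf{\Lambda}^{(k+1)}$ in the limit, and is where the bilinear coupling $\mathbf{P}^{\rm T}\mathbf{D}$ makes the argument non-standard). This is precisely where the three hypotheses act in concert: $\sum_k 1/\tau^{(k)}<\infty$ together with the ratio bound $\tau^{(k+1)}/\tau^{(k)}\le C$ keeps every error of the descent recursion of order $1/\tau^{(k)}$ and summable, while $\|\mathbf{\Lambda}^{(k)}\|_{\rm F}\le M$ pins the residual at $\|\mathbf{r}^{(k+1)}\|_{\rm F}=O(1/\tau^{(k)})$ and bounds the cross term; I would then refine the descent inequality by carrying $\tau^{(k)}\|\mathbf{r}^{(k+1)}\|_{\rm F}^2$ explicitly and use $\beta$-strong convexity of the $\mathbf{P}$-subproblem to control $\|\Delta\mathbf{P}^{(k)}\|_{\rm F}$ in terms of $\|\mathbf{r}^{(k+1)}\|_{\rm F}$ and the remaining summable displacements. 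Finally, to promote ``every accumulation point is stationary'' to convergence of the whole sequence as stated, one invokes uniqueness of the accumulation point, or a Kurdyka--\L ojasiewicz inequality for $\mathcal{L}_{\tau^{(k)}}$ applied to the merit sequence $\Phi^{(k)}$.
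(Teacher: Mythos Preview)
Your approach is essentially the paper's: a pseudo-descent inequality for $\mathcal{L}_{\tau^{(k)}}$ (stated there as a separate lemma with proof analogous to~\cite{MatFact}), telescoping with a summable $O(1/\tau^{(k)})$ error to force $\Delta\mathbf{D}^{(k)},\Delta\mathbf{P}^{(k)}\to\mathbf 0$, and then passing to the limit in the subproblem optimality conditions after substituting the dual update. The paper only spells out the $\mathbf{P}$-step limit---where your cross term is absent because the $\mathbf{P}$-update already uses $\mathbf{D}^{(k+1)}$ and so matches the dual update exactly---and defers the $\mathbf{D}$-step (hence implicitly your $\tau^{(k)}\mathbf{P}^{(k)}(\Delta\mathbf{P}^{(k)})^{\rm T}\mathbf{D}^{(k+1)}$ obstacle) to~\cite{MatFact}; your KL remark for upgrading subsequential to full-sequence convergence goes beyond what the paper writes out.
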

\begin{proof}
    See Appendix \ref{appendix:2}.
\end{proof}
We motivate the boundedness assumption of the Lagrange multiplier by the following Lemma
\begin{lemma}
    \label{lem:2}
    For a feasible problem \eqref{eq:optprob}, the stationary Lagrange multiplier $\mathbf{\Lambda}$ corresponding to the equality constraint is bounded if $T \geq \text{rank}(\mathbf{G})$.
\end{lemma}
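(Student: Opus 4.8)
The plan is to write out the first-order (KKT) stationarity conditions of \eqref{eq:optprob} and solve them in closed form for the equality multiplier, after which a finite bound is immediate. Abbreviate $\mathbf{M}:=(\mathbf{G}\odot\mathbf{H}^{-1})^{\mathrm T}$ and form the Lagrangian $\mathcal{L}=\tfrac12\|\mathbf{D}\|_{\mathrm F}^2+\langle\mathbf{\Lambda},\,\mathbf{P}^{\mathrm T}\mathbf{D}-\mathbf{M}\rangle+\sum_j\mu_j(\|\mathbf{p}_j\|_2^2-P_j)$, with multipliers $\mathbf{\Lambda}\in\mathbb{R}^{p\times m}$ and $\mu_j\ge 0$. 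Stationarity in $\mathbf{D}$, i.e.\ $\mathbf{D}+\mathbf{P}\mathbf{\Lambda}=\mathbf 0$, gives the key identity $\mathbf{D}=-\mathbf{P}\mathbf{\Lambda}$; I would record the stationarity condition in $\mathbf{P}$ and complementary slackness as well, though I do not expect to use them for the bound.

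I would then combine this identity with primal feasibility: substituting $\mathbf{D}=-\mathbf{P}\mathbf{\Lambda}$ into the equality constraint $\mathbf{P}^{\mathrm T}\mathbf{D}=\mathbf{M}$ yields the normal-equation system $\mathbf{P}^{\mathrm T}\mathbf{P}\,\mathbf{\Lambda}=-\mathbf{M}$. Because $\mathbf{M}=\mathbf{P}^{\mathrm T}\mathbf{D}$ at the stationary point, $\text{col}(\mathbf{M})\subseteq\text{col}(\mathbf{P}^{\mathrm T})=\text{col}(\mathbf{P}^{\mathrm T}\mathbf{P})$, so this system is consistent and its minimum-Frobenius-norm solution is $\mathbf{\Lambda}=-(\mathbf{P}^{\mathrm T}\mathbf{P})^{+}\mathbf{M}=-\mathbf{P}^{+}\mathbf{D}$, a finite matrix. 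In particular $\|\mathbf{\Lambda}\|_{\mathrm F}\le\|\mathbf{P}^{+}\|_2\,\|\mathbf{D}\|_{\mathrm F}$, whose right-hand side is finite since $\mathbf{D}$ is finite at a stationary point and $\|\mathbf{P}^{+}\|_2<\infty$ once $\mathbf{P}\neq\mathbf 0$ — and $\mathbf{P}\neq\mathbf 0$ because the $g_{ij}$ are nonzero, so $\mathbf{M}\neq\mathbf 0$. The hypothesis $T\ge\text{rank}(\mathbf{G})$ enters exactly here: by the remark following \eqref{eq:optprob} it is the condition under which \eqref{eq:optprob} is feasible, hence under which a stationary primal point — and therefore a bounded $\mathbf{\Lambda}$ — exists at all.

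To recover the uniform bound $\|\mathbf{\Lambda}^{(k)}\|_{\mathrm F}\le M$ assumed in Theorem \ref{theorem:2}, I would then specialize to the tight choice $T=\text{rank}(\mathbf{M})$: there every feasible $\mathbf{P}\in\mathbb{R}^{T\times p}$ necessarily has rank $T$ (otherwise $\mathbf{P}^{\mathrm T}\mathbf{D}$ cannot attain rank $\text{rank}(\mathbf{M})$), so $\text{col}(\mathbf{P}^{\mathrm T})=\text{col}(\mathbf{M})$, and at the optimizer $\mathbf{D}=(\mathbf{P}^{\mathrm T})^{+}\mathbf{M}$, giving $\|\mathbf{D}\|_{\mathrm F}^2=\sum_i\sigma_i(\mathbf{P})^{-2}\|\mathbf{v}_i^{\mathrm T}\mathbf{M}\|_2^2$ in terms of the singular values $\sigma_i(\mathbf{P})$ and right singular vectors $\mathbf{v}_i$ of $\mathbf{P}$, with $\|\mathbf{v}_r^{\mathrm T}\mathbf{M}\|_2\ge\sigma^{+}_{\min}(\mathbf{M})$ for the vector $\mathbf{v}_r$ associated with the smallest singular value (using $\mathbf{v}_r\in\text{col}(\mathbf{M})$). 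Bounding $\|\mathbf{D}\|_{\mathrm F}$ at the optimizer by $\|\mathbf{D}_0\|_{\mathrm F}$ for any reference feasible pair $(\mathbf{P}_0,\mathbf{D}_0)$ then yields $\sigma_{\min}(\mathbf{P})\ge\sigma^{+}_{\min}(\mathbf{M})/\|\mathbf{D}_0\|_{\mathrm F}$ and hence $\|\mathbf{\Lambda}\|_{\mathrm F}\le\|\mathbf{D}_0\|_{\mathrm F}^2/\sigma^{+}_{\min}(\mathbf{M})$, a constant depending only on the data $(\mathbf{G},\mathbf{H},\{P_j\})$.

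The step I expect to be the main obstacle is handling the non-uniqueness and possible ill-conditioning of $\mathbf{\Lambda}$ when $\mathbf{P}^{\mathrm T}\mathbf{P}$ is singular, i.e.\ whenever $\text{rank}(\mathbf{P})<p$, which is the generic situation since $\text{rank}(\mathbf{P})\le T$ and $T$ is taken close to $\text{rank}(\mathbf{G})\le\min(m,p)$. In that regime the equality multiplier is pinned down only modulo $\ker(\mathbf{P})$, so the statement should be read for the minimum-norm representative; to make the lemma genuinely underpin Theorem \ref{theorem:2} one should then check that this is precisely the representative produced by the dual update of Algorithm \ref{alg:admm}. The degenerate case $\mathbf{P}=\mathbf 0$ is harmless and is excluded by $\mathbf{M}\neq\mathbf 0$.
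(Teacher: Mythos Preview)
Your argument only uses the $\mathbf D$-stationarity condition (KKT in $\mathbf D$) together with primal feasibility, and you yourself correctly identify the resulting gap: the multiplier is then determined only modulo $\ker(\mathbf P)$, so nothing prevents an unbounded stationary $\mathbf\Lambda$. Your proposed fix---declare the lemma to be about the minimum-norm representative and hope that Algorithm~\ref{alg:admm} selects it---changes the statement rather than proves it.

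The paper closes exactly this gap by bringing in the condition you set aside, namely stationarity in $\mathbf P$ (your $\nabla_{\mathbf P}\mathcal L=0$, the paper's KKT-2: $\mathbf D\mathbf\Lambda^{\mathrm T}=\mathbf P\,\mathrm{diag}(\gamma_j)$). Any direction $\mathbf\Delta$ along which $\mathbf\Lambda$ could grow unboundedly while all KKT conditions remain satisfied must have its columns in $\ker(\mathbf P)$ (from KKT-1) \emph{and} its rows in $\ker(\mathbf D)$ (from KKT-2). The hypothesis $T\ge\mathrm{rank}(\mathbf G)$ is then used not merely as a feasibility check, but in a dimension count: taking $\mathrm{rank}(\mathbf G)=\min(m,p)$, one of $\ker(\mathbf P)\subset\mathbb R^{p}$ or $\ker(\mathbf D)\subset\mathbb R^{m}$ is trivial, which forces $\mathbf\Delta=\mathbf 0$ and hence boundedness. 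In short, the second stationarity condition is the missing ingredient that kills the $\ker(\mathbf P)$ freedom you were worried about; your third paragraph on uniform bounds is then unnecessary.
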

\begin{proof}
See Appendix \ref{appendix:3}.
\end{proof}
\begin{rem}
% The assumption of a bounded Lagrange multiplier $\mathbf{\Lambda}^{(k)}$ for Theorem \ref{theorem:2} is natural, since this is a necessary condition for the feasible region of \eqref{eq:optprob} to be non-empty. Additionally, f
For any feasible problem \eqref{eq:optprob}, the equality constraint guarantees that the reconstructed closed-loop matrix $\hat{\mathbf{A}} = \mathbf{B}(\mathbf{H}\odot\mathbf{P}^{\rm T}\mathbf{D})^{\rm T}\mathbf{C}$ is Schur since $(\mathbf{H}\odot\mathbf{P}^{\rm T}\mathbf{D})^{\rm T} = \mathbf{G}$, where $\mathbf{G}$ is from \eqref{eq:relaxedh}.
%For the standard relaxed formulations, see e.g.,~\cite{}, this assumption is typically not necessary due to 
\end{rem}
\begin{algorithm}[t!]
\caption{Modified ADMM for \eqref{eq:optprob}}
\label{alg:admm}
\begin{algorithmic}
\State Input: $\mathbf{G}, \mathbf{H}, \tau^{(0)},\alpha,\beta$
\State $\mathbf{\Lambda}^{(0)}\gets \mathbf{0}$
\State Generate $\mathbf{P}^{(0)},\mathbf{D}^{(0)}$ randomly
\For {$k \leq \rm{T_{\max}}$}
 \State \hspace{-7pt}$\mathbf{D}^{(k+1)} \gets \argmin_{\mathbf{D}} \: \hat{\mathcal{L}}_{\tau^{(k)}}(\mathbf{D},\mathbf{P}^{(k)},\mathbf{\Lambda}^{(k)})$
 %+ \|\Delta\mathbf{D}_{\alpha,k}\|_{\rm F}^2$ 
 
 \State \hspace{-10pt} $\mathbf{P}^{(k+1)} \gets \argmin_{\mathbf{P}} \: \hat{\mathcal{L}}_{\tau^{(k)}}(\mathbf{D}^{(k+1)},\mathbf{P},\mathbf{\Lambda}^{(k)})$
 
 \State \hspace{-10pt} $\mathbf{\Lambda}^{(k+1)} \gets \mathbf{\Lambda}^{(k)} + \tau^{(k)}((\mathbf{G}\odot \mathbf{H}^{-1})^{\rm T} - \mathbf{P}^{(k+1)^{\rm T}}\mathbf{D}^{(k+1)})$
 
 \hspace{-11pt}Update $\tau^{(k)} < \tau^{(k+1)}$
\EndFor 
%\State Set $\mathbf{P}^* = \mathbf{P}^{(\rm T)}$  and $\mathbf{D}^* = \mathbf{D}^{(\rm T)}$
\end{algorithmic}
\end{algorithm}

% One way to handle the problematic entry $\mathcal{X}^{-1}$, as in e.g.,~\cite{1272453},~\cite{618250}, is by linearization of $\mathcal{X}^{-1}$ about a matrix $\mathcal{X}_k$ as
% \begin{equation}
%     \mathcal{X}^{-1}|_{\mathcal{X}_k} \approx \mathcal{X}_k^{-1} - \mathcal{X}_k^{-1}(\mathcal{X}-\mathcal{X}_k)\mathcal{X}_k^{-1}.
% \end{equation}
% Using Algorithm \ref{alg:lcs}, 
\vspace{-6pt}
\section{Simulation Results}\label{sec:simulations}

In this section, we present two sets of numerical experiments to illustrate the advantages of our OAC control framework. Unless otherwise stated, we use $\alpha = \beta = 0.1$ and $\tau^{(k)} = k^{1.5}$ in Algorithm \ref{alg:admm}. The subproblem in the $\mathbf{P}$-update is solved using standard ADMM.

\subsection{Stability under Power Constraints}
The stability of system \eqref{eq:closed loop} is investigated for a sequence of power constraints $P_j = P_{\max} \in\{0.1, 0.2,\ldots,1\}$ for all $j$, in \eqref{eq:optprob}.
Matrices ($\mathbf{A}, \mathbf{B}$,$\mathbf{C}$) are randomly generated with $n=6$, $p=m=4$. For a given triplet $(\mathbf{A}, \mathbf{B}, \mathbf{C})$, we find the stabilizing $\mathbf{G}$ from \eqref{eq:relaxedh}. We compare our method \eqref{eq:optprob} of reconstructing $\mathbf{G} \approx \mathbf{H} \: \odot \:\mathbf{D}\mathbf{P}^{\rm T}$ to the unconstrained multiple-receiver OAC approach in~\cite{10124016}. The number of timeslots is set to $T = 4$ for both methods and the channel coefficients $h_{ij}$ are distributed as Rayleigh$(1)$. For the experiment, we generate $100$ triplets $(\mathbf{A},\mathbf{B},\mathbf{C})$ per power constraint $P_{\max}$ and compute the percentage of unstable matrices $\hat{\mathbf{A}}=\mathbf{A}+\mathbf{B}\mathbf{H} \odot \mathbf{D}\mathbf{P}^{\rm T}\mathbf{C}$.

Fig. \ref{fig:rhoA} shows the percentage of unstable closed-loop matrices, i.e., $\rho(\hat{\mathbf{A}}) > 1$, for our approach and the one in~\cite{10124016}. We note that our approach conserves stability due to the equality constraint in \eqref{eq:optprob}, while the unconstrained approach in~\cite{10124016} compromises the stability of the system. In particular, we can recover an exact factorization of $\mathbf{G}$ for the same number of timeslots $T$.
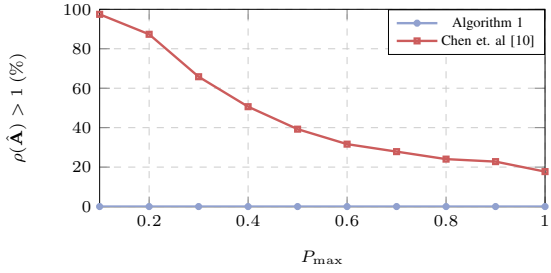
\begin{figure}[!t]
\centering
    \begin{tikzpicture}[transform shape,scale=0.9]
    \begin{axis}[
        xlabel={$P_{\max}$},
        ylabel={$\rho(\hat{\mathbf{A}}) > 1\: (\%)$},
        label style={font=\scriptsize},
        % title = {\scriptsize $???$},
        %legend pos=,
        tick label style={font=\scriptsize} , 
        width=0.45\textwidth,
        height=4.5cm,
        xmin=0.1, xmax=1,
        ymin=-0.1, ymax=100,
        %xtick={-1, -0.5, 0, 0.5, 1},
        %ytick={0, 20, 40, 60, 80, 100},
         %ymode = log,
       legend style={nodes={scale=0.6, transform shape}, at={(1,1)}}, 
        ymajorgrids=true,
        xmajorgrids=true,
        grid style=dashed,
        grid=both,
        grid style={line width=.1pt, draw=gray!15},
        major grid style={line width=.2pt,draw=gray!40},
    ]
% ==========
     \addplot[
        color=ceil,
        mark=*,
        line width=1pt,
        mark size=1pt,
        ]
    table[x=x,y=C]
    {data/oureigs.dat};
     \addplot[
        color=chestnut,
        mark=square,
        line width=1pt,
        mark size=1pt,
        ]
    table[x=x,y=C]
    {data/zheng.dat};
    \legend{Algorithm \ref{alg:admm}, Chen et. al \cite{10124016}, legend pos = north east};
    \end{axis}
\end{tikzpicture}
  \caption{Percentage of unstable closed loop matrices $\hat{\mathbf{A}} = \mathbf{A}+\mathbf{B}\mathbf{G}\mathbf{C}$ with $\mathbf{G}$ reconstructed as $\mathbf{G} \approx (\mathbf{H}\odot\mathbf{P}^{\rm T}\mathbf{D})^{\rm T}$ from \eqref{eq:optprob}. The closed loop matrix $\hat{\mathbf{A}}$ is obtained for $100$ independent random initializations of $(\mathbf{A},\mathbf{B},\mathbf{C})$ in \eqref{eq:relaxedh}.}
  \vspace{-5pt}
  \label{fig:rhoA}
\end{figure}

% \begin{figure}[t]
%     \centering
%     \includegraphics[scale=0.45]{}
%     \caption{Percentage of unstable closed loop matrices $\hat{\mathbf{A}} = \mathbf{A}+\mathbf{B}\mathbf{G}\mathbf{C}$ with $\mathbf{G}$ reconstructed as $\mathbf{G} \approx (\mathbf{H}\odot\mathbf{P}^{\rm T}\mathbf{D})^{\rm T}$ from \eqref{eq:optprob}. The closed loop matrix $\hat{\mathbf{A}}$ is obtained for $100$ independent random initializations of $(\mathbf{A},\mathbf{B},\mathbf{C})$ in \eqref{eq:relaxedh}.}
%     \label{fig:eig_sims}
% \end{figure}
\subsection{Control Peformance}
We consider the ball and beam system with $n=p = 4$, $m=1$ and discretization $\delta = 100$ ms. The noise variance is set to $\sigma^2 = 0.01$, $h_{ij} \in $ Rayleigh$(1)$ and the power constraint $P_j \in \{0.1,\ldots,1\}$ for all $j$. The SNR is $10\log (P_j/\sigma^2)$. Each data point is taken as the average MSE of 100 Monte Carlo simulations, each corresponding to a 5 second evolution of the state $\mathbf{x}[k]$.
We compare our approach with the single actuator $\mathcal{H}_\infty$ OAC approach in~\cite{parkopt}. For a fair comparison, we set $\mathbf{C} = \mathbf{I}$ since this is necessary in their system model. Moreover, we find that the algorithm proposed in~\cite{parkopt} is infeasible with respect to the power constraint for most of the realizations of $h_{ij}$. For the sake of fairness, we scale their corresponding decoding factors accordingly to compensate for the cases of infeasibility.

Fig. \ref{fig:MSEplot} shows the average MSE for the unconstrained problem, as well as for the cases where $\|\mathbf{G}\|_{\rm F} \leq 50$ and $\|\mathbf{G}\|_{\rm F} \leq ~36$ in \eqref{eq:relaxedh}. Compared to the OAC approach in~\cite{parkopt}, we observe significant reductions in MSE. Furthermore, constraining the entries of \(\mathbf{G}\) to be smaller results in a lower MSE due to the interaction between \(\mathbf{P}\) and \(\mathbf{D}\) in \eqref{eq:optprob}. In particular, having small-magnitude entries in \(\mathbf{G}\) allows the entries in \(\mathbf{D}\) to be smaller, leading to reduced amplification of \(\hat{\mathbf{n}}[k]\) in \eqref{eq:closed loop}.

% \begin{figure}[t]
%     \centering
%     \includegraphics[scale=0.45]{}
%     \caption{Average MSE of state-vector $\mathbf{x}[k]$ over $5$ seconds in the ball and beam system with sampling period $\delta=100$ ms. Each data point is the average of 100 Monte Carlo simulations.}
%     \label{fig:rhoA}
% \end{figure}
\begin{figure}[!t]
\centering
    \begin{tikzpicture}[transform shape,scale=0.9]
    \begin{axis}[
        xlabel={SNR (dB)},
        ylabel={MSE},
        label style={font=\scriptsize},
        % title = {\scriptsize $???$},
        %legend pos=,
        tick label style={font=\scriptsize} , 
        width=0.45\textwidth,
        height=5cm,
        xmin=10, xmax=20,
        ymin=1e-3, ymax=2e1,
         ymode = log,
       legend style={nodes={scale=0.45, transform shape}, at={(1,1)}}, 
        ymajorgrids=true,
        xmajorgrids=true,
        grid style=dashed,
        grid=both,
        grid style={line width=.1pt, draw=gray!15},
        major grid style={line width=.2pt,draw=gray!40},
    ]
% ==========
     \addplot[
        color=cadmiumorange,
        mark=triangle,
        line width=1pt,
        mark size=1.5pt,
        ]
    table[x=x,y=C]
    {data/Park.dat};
     \addplot[
        color=ceil,
        mark=*,
        line width=1pt,
        mark size=1.5pt,
        ]
    table[x=x,y=C]
    {data/Unconstr.dat};
     \addplot[
        color=chestnut,
        mark=x,
        line width=1pt,
        mark size=1.5pt,
        ]
    table[x=x,y=C]
    {data/50.dat};
     \addplot[
        color=cssgreen,
        mark=square,
        line width=1pt,
        mark size=1.5pt,
        ]
    table[x=x,y=C]
    {data/36.dat};
    \legend{Park et. al \cite{parkopt}, Unconstrained, $\|\mathbf{G}\|_{\rm F} \leq 50$,$\|\mathbf{G}\|_{\rm F}\leq36$, legend pos = north east};
    \end{axis}
\end{tikzpicture}
\vspace{-7.5pt}
  \caption{Average MSE of state-vector $\mathbf{x}[k]$ over $5$ seconds in the ball and beam system with sampling period $\delta=100$ ms. Each data point is the average of 100 Monte Carlo simulations.}
  \vspace{-5pt}
  \label{fig:MSEplot}
\end{figure}
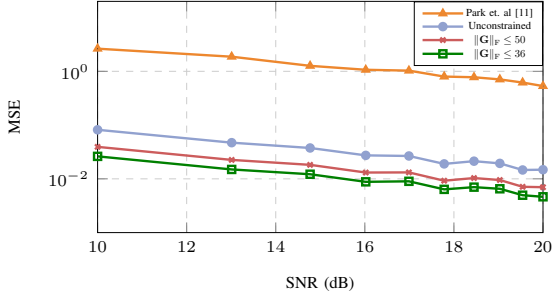

% \subsection{Stability under OtA-transmissions}
% We investigate how the OtA 
% \begin{figure}
%     \centering
%     \includegraphics[scale=0.6]{Simulations/ZhengSpectral.eps}
%     \caption{Percentage of unstable closed loop matrices $\hat{\mathbf{A}} = \mathbf{A}+\mathbf{B}\mathbf{G}\mathbf{C}$ with $\mathbf{G}$ reconstructed as $\mathbf{G} \approx (\mathbf{H}\odot\mathbf{P}^{\rm T}\mathbf{D})^{\rm T}$ from \eqref{eq:optprob}. The closed loop matrix $\hat{\mathbf{A}}$ is obtained for $100$ independent random initializations of $(\mathbf{A},\mathbf{B},\mathbf{C})$ in \eqref{eq:relaxedh}.}    
%     %\caption{Spectral radius $\rho(\hat{\mathbf{A}})$ for different levels of SNR. The closed loop matrix $\hat{\mathbf{A}}$ is constructed for $50$ independent random initializations of $(A,B,C)$ in \eqref{eq:relaxedh}. We compare our OtA approach in \eqref{eq:optprob} to the one in~\cite{10124016}. For a given SNR, the maximum spectral radius of the $50$ constructed closed loop matrices $\hat{\mathbf{A}}$ is shown.}
%     \label{fig:eig_sims}
% \end{figure}

% \begin{figure}
%     \centering
%     \includegraphics[scale=0.6]{Simulations/mseparkvsus.eps}
%     \caption{Average MSE of state-vector $x(t)$ over $5$ seconds in the ball and beam system with sampling period $\delta=100$ ms. Each data point is the average of 100 Monte Carlo simulations.}
%     \label{fig:enter-label}
% \end{figure}

\section{Conclusion}\label{sec:conclude}
In this paper, we developed a multi-sender, multi-receiver over-the-air (OAC) framework for WNCSs, which allows actuators to directly compute and apply control signals from sensor measurements without requiring a centralized controller. We introduced an iterative linearization procedure for a stable control system subject to an energy-to-energy performance metric and a structured network topology.
By formulating the design of OAC transmissions as a constrained matrix factorization problem, we preserve the stability of the WNCS while accounting for channel noise, interference, and limited power budgets. Numerical results confirmed that the proposed method achieves robust performance and closed-loop stability under communication and power constraints, demonstrating significant advantages over existing single-receiver OAC schemes. 
 \bibliographystyle{IEEEtran}
 \bibliography{Ref-files}

% Generated by IEEEtran.bst, version: 1.14 (2015/08/26)
\begin{thebibliography}{10}
\providecommand{\url}[1]{#1}
\csname url@samestyle\endcsname
\providecommand{\newblock}{\relax}
\providecommand{\bibinfo}[2]{#2}
\providecommand{\BIBentrySTDinterwordspacing}{\spaceskip=0pt\relax}
\providecommand{\BIBentryALTinterwordstretchfactor}{4}
\providecommand{\BIBentryALTinterwordspacing}{\spaceskip=\fontdimen2\font plus
\BIBentryALTinterwordstretchfactor\fontdimen3\font minus \fontdimen4\font\relax}
\providecommand{\BIBforeignlanguage}[2]{{%
\expandafter\ifx\csname l@#1\endcsname\relax
\typeout{** WARNING: IEEEtran.bst: No hyphenation pattern has been}%
\typeout{** loaded for the language `#1'. Using the pattern for}%
\typeout{** the default language instead.}%
\else
\language=\csname l@#1\endcsname
\fi
#2}}
\providecommand{\BIBdecl}{\relax}
\BIBdecl

\bibitem{pajic2011wireless}
M.~Pajic, S.~Sundaram, G.~J. Pappas, and R.~Mangharam, ``The wireless control network: A new approach for control over networks,'' \emph{IEEE transactions on automatic control}, vol.~56, no.~10, pp. 2305--2318, 2011.

\bibitem{park2017wireless}
P.~Park, S.~C. Ergen, C.~Fischione, C.~Lu, and K.~H. Johansson, ``Wireless network design for control systems: A survey,'' \emph{IEEE Communications Surveys \& Tutorials}, vol.~20, no.~2, pp. 978--1013, 2017.

\bibitem{zeng2019joint}
T.~Zeng, O.~Semiari, W.~Saad, and M.~Bennis, ``Joint communication and control for wireless autonomous vehicular platoon systems,'' \emph{IEEE Transactions on Communications}, vol.~67, no.~11, pp. 7907--7922, 2019.

\bibitem{csahin2023survey}
A.~{\c{S}}ahin and R.~Yang, ``A survey on over-the-air computation,'' \emph{IEEE Communications Surveys \& Tutorials}, vol.~25, no.~3, pp. 1877--1908, 2023.

\bibitem{park2021wireless}
P.~Park, P.~Di~Marco, and C.~Fischione, ``Wireless for control: Over-the-air controller,'' \emph{IEEE Communications Letters}, vol.~25, no.~10, pp. 3437--3441, 2021.

\bibitem{abari2016over}
O.~Abari, H.~Rahul, and D.~Katabi, ``Over-the-air function computation in sensor networks,'' \emph{arXiv preprint arXiv:1612.02307}, 2016.

\bibitem{walsh2002stability}
G.~C. Walsh, H.~Ye, and L.~G. Bushnell, ``Stability analysis of networked control systems,'' \emph{IEEE Transactions on Control Systems Technology}, vol.~10, no.~3, pp. 438--446, 2002.

\bibitem{carnevale2007further}
D.~Carnevale, A.~R. Teel, and D.~Nesic, ``Further results on stability of networked control systems: a lyapunov approach,'' in \emph{2007 American Control Conference}.\hskip 1em plus 0.5em minus 0.4em\relax IEEE, 2007, pp. 1741--1746.

\bibitem{csdesign}
G.~GC, G.~SF, and M.~E. Salgado, \emph{Control System Design}, 01 2001.

\bibitem{10124016}
Z.~Chen and Y.~Malitsky, ``Over-the-air computation with multiple receivers: A space-time approach,'' \emph{IEEE Wireless Communications Letters}, vol.~12, no.~8, pp. 1399--1403, 2023.

\bibitem{parkopt}
P.~Park, P.~D. Marco, and C.~Fischione, ``Optimized over-the-air computation for wireless control systems,'' \emph{IEEE Communications Letters}, vol.~26, no.~2, pp. 424--428, 2022.

\bibitem{10189841}
S.~Kim, V.~De~Iuliis, P.~Di~Marco, and P.~Park, ``Control system-oriented mimo over-the-air computing,'' \emph{IEEE Access}, vol.~11, pp. 76\,498--76\,505, 2023.

\bibitem{9095231}
W.~Liu, X.~Zang, Y.~Li, and B.~Vucetic, ``Over-the-air computation systems: Optimization, analysis and scaling laws,'' \emph{IEEE Transactions on Wireless Communications}, vol.~19, no.~8, pp. 5488--5502, 2020.

\bibitem{unifiedalgebraic}
R.~Skelton, T.~Iwasaki, and K.~Grigoriadis, \emph{A unified algebraic approach to linear control design}, 01 2017.

\bibitem{1272453}
J.~Han and R.~Skelton, ``An lmi optimization approach for structured linear controllers,'' in \emph{42nd IEEE International Conference on Decision and Control}, vol.~5, 2003, pp. 5143--5148 Vol.5.

\bibitem{914229}
M.~de~Oliveira, J.~Camino, and R.~Skelton, ``A convexifying algorithm for the design of structured linear controllers,'' in \emph{Proceedings of the 39th IEEE Conference on Decision and Control}, vol.~3, 2000, pp. 2781--2786 vol.3.

\bibitem{618250}
L.~El~Ghaoui, F.~Oustry, and M.~AitRami, ``A cone complementarity linearization algorithm for static output-feedback and related problems,'' \emph{IEEE Transactions on Automatic Control}, vol.~42, no.~8, pp. 1171--1176, 1997.

\bibitem{sindri}
S.~Magnússon, P.~C. Weeraddana, M.~G. Rabbat, and C.~Fischione, ``On the convergence of alternating direction lagrangian methods for nonconvex structured optimization problems,'' \emph{IEEE Transactions on Control of Network Systems}, vol.~3, no.~3, pp. 296--309, 2016.

\bibitem{bhatia1997matrix}
R.~Bhatia, \emph{Matrix Analysis}.\hskip 1em plus 0.5em minus 0.4em\relax New York, NY: Springer, 1997.

\bibitem{MatFact}
S.~Hussein, S.~Razavikia, S.~Daei, and C.~Fischione, ``Distributed computing by matrix factorization,'' in \emph{58th Asilomar Conference on Signals, Systems, and Computers}, 2024.

\end{thebibliography}
%\newpage
\appendix
\subsection{Proof of Theorem 1}
The map $\mathcal{X} \to \mathcal{X}^{-1}$ is operator-convex on the cone of positive definite matrices~\cite{bhatia1997matrix}, meaning for $\mathcal{X} \succ 0,\mathcal{Y} \succ 0$ and $t\in(0,1)$, the following holds
\begin{equation}
    (t \mathcal{X} + (1-t)\mathcal{Y})^{-1} \preceq t \mathcal{X}^{-1} + (1-t)\mathcal{Y}^{-1}.
\end{equation}
Consequently, we have for $\mathcal{X}_k \succ 0, \mathcal{X}_k+\Delta \succ 0$,
\begin{equation}
    \begin{aligned}
    \frac{(\mathcal{X}_k + t\Delta)^{-1} - \mathcal{X}_k^{-1}}{t} &= \frac{((1-t)\mathcal{X}_k + t(\mathcal{X}_k + \Delta))^{-1} - \mathcal{X}_k^{-1}}{t}\\
    &\preceq \frac{(1-t)\mathcal{X}_k^{-1} + t(\mathcal{X}_k+\Delta)^{-1}-\mathcal{X}_k^{-1}}{t}\\
    &= -\mathcal{X}_k^{-1} + (\mathcal{X}_k+ \Delta)^{-1},
    \end{aligned}
\end{equation}
so by taking the Fréchet derivative, we get 
\begin{equation}
    \begin{aligned}
     \lim_{t\to0^{+}} \frac{(\mathcal{X}_k + t\Delta)^{-1} - \mathcal{X}_k^{-1}}{t}&=  -\mathcal{X}_k^{-1}\Delta\mathcal{X}_k^{-1} \\
     &\preceq -\mathcal{X}_k^{-1} + (\mathcal{X}_k+\Delta)^{-1}.
    \end{aligned}
\end{equation}
Now, let $\Delta = \mathcal{X} - \mathcal{X}_k$, which yields
\begin{equation}
    \label{eq:invmajor}
    -\mathcal{X}_k^{-1}(\mathcal{X}-\mathcal{X}_k)\mathcal{X}_k^{-1} + \mathcal{X}_k^{-1} \preceq \mathcal{X}^{-1},
\end{equation}
meaning $\mathcal{X}^{-1}$ majorizes its linearization at $\mathcal{X}_k$ if $\mathcal{X}, \mathcal{X}_k \succ 0$. Hence, it is straightforward to see that any feasible solution to \eqref{eq:relaxedh} must be feasible to \eqref{eq:hinf}.

From the definition of $L(\mathcal{X}^{-1},\mathcal{X}_k)$, it is clear that $\mathcal{X}_k$ is feasible with respect to \eqref{eq:relaxedh} at the subsequent iteration $k+1$. Consequently, we see that if $\mathcal{X}_{k+1} \neq \mathcal{X}_k$ then $\gamma_{k+1} < \gamma_k$, and 
% by the monotonicity of the map $\gamma \to \gamma^2$ for $\gamma > 0$. 
since $\gamma$ is lower bounded in the feasible set, the sequence $\gamma_{k}$ decreases unless $\gamma_{k+1} = \gamma_{k}$.
% As the feasible region of \eqref{eq:relaxedh} is a subset of the feasible region of \eqref{eq:hinf}, the condition $\gamma_{k+1} = \gamma_k$ implies that there is a neighborhood around $(\gamma_{k+1}, \mathcal{X}_{k+1}, \mathbf{G}_{k+1})$ in which the objective value $\gamma^2$ cannot improve in either problem. This is the definition of a stationary point.
%\vspace{-15pt}
\label{appendix:1}
\subsection{Proof of Theorem 2}
\label{appendix:2}
Define $\Delta \mathbf{X}^{(k+1)} := \mathbf{X}^{(k)}-\mathbf{X}^{(k+1)}, \delta \tau^{(k+1)}:=\frac{\tau^{(k)}+\tau^{(k+1)}}{2(\tau^{(k)})^2}$ as shorthands.
The following Lemma is used to establish a pseudo-descent for the augmented Lagrangian
% The explicit variable updates for Algorithm \ref{alg:admm} are as follows,
% \begin{equation}
%     \begin{aligned}
%     \label{eq:dupdate}
%     \mathbf{D}^{(k+1)} &= \argmin_{\mathbf{D}} \:\|\mathbf{D}\|_F^2 +(\alpha/2)\|\mathbf{D}^{(k)} - \mathbf{D}\|_{\rm F}^2 \\
%     &+ (\rho^{(k)}/2)\|\mathbf{F} - \mathbf{P}^{(k)^{\rm T}}\mathbf{D} + \mathbf{\Lambda}^{(k)}/\rho^{(k)}\|_{\rm F}^2
% \end{aligned}
% \end{equation}
% \begin{equation}
%     \label{eq:pupdate}
%     \begin{aligned}
%     \mathbf{P}^{(k+1)} &= \argmin_{\mathbf{P}} \:
%     \mathbf{I}_{\Omega}(\mathbf{P}) + (\beta/2)\|\mathbf{P}^{(k)} - \mathbf{P}\|_{\rm F}^2 \\
%     &+(\rho^{(k)}/2)\|\mathbf{F} - \mathbf{P}^{\rm T}\mathbf{D}^{(k+1)} + \mathbf{\Lambda}^{(k)}/\rho^{(k)}\|_{\rm F}^2
%     \end{aligned}
% \end{equation}
% \begin{equation}
%     \label{eq:dualupd}
%     \hspace{-50pt}\mathbf{\Lambda}^{(k+1)} = \mathbf{\Lambda}^{(k)} + \rho(\mathbf{F} - \mathbf{P}^{(k+1)^{\rm T}}\mathbf{D}^{(k+1)}).
% \end{equation}
% We will use the following Lemma to establish primal descent with respect to the augmented Lagrangian $\mathcal{L}_{\rho^{(k)}}(\mathbf{D},\mathbf{P},\mathbf{\Lambda})$,
\begin{lemma}
The following bound holds for $\mathbf{D}^{(k+1)}, \mathbf{P}^{(k+1)}$ and $\mathbf{\Lambda}^{(k+1)}$ generated from Algorithm \ref{alg:admm}:
% \begin{equation}
%     \begin{aligned}
%     &\mathcal{L}_{\tau^{(k)}}(\mathbf{P}^{(k)},\mathbf{D}^{(k)},\mathbf{\Lambda}^{(k)}) - \mathcal{L}_{\tau^{(k+1)}}(\mathbf{P}^{(k+1)},\mathbf{D}^{(k+1)},\mathbf{\Lambda}^{(k+1)}) \\
%     &\geq (\alpha/2)\|\mathbf{P}^{(k)} - \mathbf{P}^{(k+1)}\|_{\rm F}^2 + (\beta/2)\|\mathbf{D}^{(k)} - \mathbf{D}^{(k+1)}\|_{\rm F}^2 \\
%     &-\frac{\tau^{(k)}+\tau^{(k+1)}}{2(\tau^{(k)})^2}\|\mathbf{\Lambda}^{(k)} - \mathbf{\Lambda}^{(k+1)}\|_{\rm F}^2.
%     \end{aligned}
% \end{equation}
\begin{equation}
    \begin{aligned}
    &\mathcal{L}_{\tau^{(k)}}(\mathbf{P}^{(k)},\mathbf{D}^{(k)},\mathbf{\Lambda}^{(k)}) - \mathcal{L}_{\tau^{(k+1)}}(\mathbf{P}^{(k+1)},\mathbf{D}^{(k+1)},\mathbf{\Lambda}^{(k+1)}) \\
    &\geq (\alpha/2)\|\Delta \mathbf{D}^{(k+1)}\|_{\rm F}^2
        + (\beta/2)\|\Delta \mathbf{P}^{(k+1)}\|_{\rm F}^2\\
        &- \delta \tau^{(k+1)}\|\Delta \mathbf{\Lambda}^{(k+1)}\|_{\rm F}^2.
    \end{aligned}
\end{equation}
\label{lemma:1}
\end{lemma}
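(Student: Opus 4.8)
\noindent\emph{Proof idea.} The plan is to write the left-hand side of Lemma~\ref{lemma:1} as a telescoping sum over the four block updates performed in one iteration of Algorithm~\ref{alg:admm} — first $\mathbf{D}$, then $\mathbf{P}$, then $\mathbf{\Lambda}$, then $\tau$ — and to bound, or evaluate exactly, the change in $\mathcal{L}$ caused by each. The first thing I would record is the algebraic identity that the term $-\|\mathbf{\Lambda}\|_{\rm F}^2/(2\tau^{(k)})$ in \eqref{eq:aug} cancels the cross term obtained when the squared residual is expanded, so that
\begin{equation*}
\mathcal{L}_{\tau}(\mathbf{D},\mathbf{P},\mathbf{\Lambda}) = (1/2)\|\mathbf{D}\|_{\rm F}^2 + \mathbf{I}_{\Omega}(\mathbf{P}) + (\tau/2)\|\mathbf{R}\|_{\rm F}^2 + \langle\mathbf{R},\mathbf{\Lambda}\rangle,
\end{equation*}
where $\mathbf{R} = \mathbf{R}(\mathbf{D},\mathbf{P}) := (\mathbf{G}\odot\mathbf{H}^{-1})^{\rm T} - \mathbf{P}^{\rm T}\mathbf{D}$ and $\langle\cdot,\cdot\rangle$ is the Frobenius inner product. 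In this form $\mathcal{L}_\tau$ is the ordinary (unscaled) augmented Lagrangian, so its dependence on $\mathbf{\Lambda}$ is affine and on $\tau$ linear — which is what makes the last two pieces exact.

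For the two primal steps I would use only the defining optimality of the argmins together with the fact that each proximal penalty vanishes at the current iterate of its own block. Since $\mathbf{D}^{(k)},\mathbf{P}^{(k)}$ are the current iterates, both penalties vanish and $\hat{\mathcal{L}}_{\tau^{(k)}}(\mathbf{D}^{(k)},\mathbf{P}^{(k)},\mathbf{\Lambda}^{(k)}) = \mathcal{L}_{\tau^{(k)}}(\mathbf{D}^{(k)},\mathbf{P}^{(k)},\mathbf{\Lambda}^{(k)})$; the $\mathbf{D}$- then the $\mathbf{P}$-minimization can only decrease $\hat{\mathcal{L}}_{\tau^{(k)}}$; and at $(\mathbf{D}^{(k+1)},\mathbf{P}^{(k+1)})$ the two penalties together equal $(\alpha/2)\|\Delta\mathbf{D}^{(k+1)}\|_{\rm F}^2 + (\beta/2)\|\Delta\mathbf{P}^{(k+1)}\|_{\rm F}^2$. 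Chaining these gives
\begin{align*}
&\mathcal{L}_{\tau^{(k)}}(\mathbf{D}^{(k)},\mathbf{P}^{(k)},\mathbf{\Lambda}^{(k)}) - \mathcal{L}_{\tau^{(k)}}(\mathbf{D}^{(k+1)},\mathbf{P}^{(k+1)},\mathbf{\Lambda}^{(k)})\\
&\qquad\ge (\alpha/2)\|\Delta\mathbf{D}^{(k+1)}\|_{\rm F}^2 + (\beta/2)\|\Delta\mathbf{P}^{(k+1)}\|_{\rm F}^2 ,
\end{align*}
with no strong-convexity modulus beyond the proximal terms being used. The indicator $\mathbf{I}_{\Omega}$ only needs $\mathbf{P}^{(k)}\in\Omega$ for this chain to be meaningful, which holds for $k\ge 1$ because the $\mathbf{P}$-update returns an iterate in $\Omega$, and for $k=0$ either by initializing $\mathbf{P}^{(0)}$ feasibly or because the claimed bound is then trivial.

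Next I would evaluate the remaining two changes in closed form, holding $\mathbf{D}^{(k+1)},\mathbf{P}^{(k+1)}$ — hence the residual $\mathbf{R}^{(k+1)} := \mathbf{R}(\mathbf{D}^{(k+1)},\mathbf{P}^{(k+1)})$ — fixed. By the simplified form of $\mathcal{L}_\tau$, the $\mathbf{\Lambda}$-update raises it by $\langle\mathbf{R}^{(k+1)},\mathbf{\Lambda}^{(k+1)}-\mathbf{\Lambda}^{(k)}\rangle = \tau^{(k)}\|\mathbf{R}^{(k+1)}\|_{\rm F}^2$ (using the dual rule), and the $\tau$-update raises it by $\tfrac{\tau^{(k+1)}-\tau^{(k)}}{2}\|\mathbf{R}^{(k+1)}\|_{\rm F}^2$; both are increases, hence enter the left-hand side of the Lemma negated. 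Substituting $\mathbf{R}^{(k+1)} = -\Delta\mathbf{\Lambda}^{(k+1)}/\tau^{(k)}$, their combined contribution is $-\big(\tfrac{1}{\tau^{(k)}} + \tfrac{\tau^{(k+1)}-\tau^{(k)}}{2(\tau^{(k)})^2}\big)\|\Delta\mathbf{\Lambda}^{(k+1)}\|_{\rm F}^2$, and since $\tfrac{1}{\tau^{(k)}} + \tfrac{\tau^{(k+1)}-\tau^{(k)}}{2(\tau^{(k)})^2} = \tfrac{\tau^{(k)}+\tau^{(k+1)}}{2(\tau^{(k)})^2} = \delta\tau^{(k+1)}$, adding this to the primal bound and telescoping the four pieces yields exactly the inequality of Lemma~\ref{lemma:1}.

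I do not expect a genuine conceptual obstacle; the work is bookkeeping. The two places to be careful are: (i) tracking which proximal term is a constant in which subproblem, so the primal decrease is charged precisely $(\alpha/2)\|\Delta\mathbf{D}^{(k+1)}\|_{\rm F}^2 + (\beta/2)\|\Delta\mathbf{P}^{(k+1)}\|_{\rm F}^2$ and nothing is double counted; and (ii) noticing the cancellation in \eqref{eq:aug} that turns $\mathcal{L}_\tau$ into the standard augmented Lagrangian — without it the $\mathbf{\Lambda}$- and $\tau$-dependence is cluttered and the identity $\tfrac{1}{\tau^{(k)}} + \tfrac{\tau^{(k+1)}-\tau^{(k)}}{2(\tau^{(k)})^2} = \delta\tau^{(k+1)}$ is much harder to spot.
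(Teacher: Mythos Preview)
Your proposal is correct and is precisely the standard descent-plus-exact-ascent bookkeeping used for proximal ADMM with an increasing penalty; the paper itself does not give a proof but defers to \cite{MatFact}, whose argument your sketch reproduces. The key observation you make --- that the $-\|\mathbf{\Lambda}\|_{\rm F}^2/(2\tau)$ term in \eqref{eq:aug} cancels the cross term so that $\mathcal{L}_\tau$ reduces to the standard unscaled augmented Lagrangian, making the $\mathbf{\Lambda}$- and $\tau$-changes exact and yielding the coefficient $\delta\tau^{(k+1)}$ --- is exactly the device needed, and your handling of the two primal blocks via the vanishing of the proximal penalties at the anchor point is clean and sufficient.
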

\vspace{-25pt}
\begin{proof}
A similar proof can be found in~\cite{MatFact}, the derivation is analogous.
\end{proof}
We use the shorthand notation $\mathcal{L}_{\tau^{(k)}}^k := \mathcal{L}_{\tau^{(k)}}(\mathbf{P}^{(k)}, \mathbf{D}^{(k)}, \mathbf{\Lambda}^{(k)})$. Using Lemma \eqref{lemma:1}, we consider the summation of differences of successive iterations $\mathcal{L}_{\tau^{(k)}}^k - \mathcal{L}_{\tau^{(k+1)}}^{k+1}$,
\begin{equation}
    \label{eq:augineq}
    \begin{aligned}
        &\sum\nolimits_{k=0}^{K-1}(\mathcal{L}_{\tau^{(k)}}^{(k)} - \mathcal{L}_{\tau^{(k+1)}}^{(k+1)}) \stackrel{}{=} \mathcal{L}_{\tau^{(0)}}^{(0)} - \mathcal{L}_{\tau^{(K)}}^{(K)}\\ 
        &\stackrel{(a)}{\geq} \sum_{k=0}^{K-1}[(\alpha/2)\|\Delta \mathbf{D}^{(k+1)}\|_{\rm F}^2
        + (\beta/2)\|\Delta \mathbf{P}^{(k+1)}\|_{\rm F}^2\\& - \delta \tau^{(k+1)}\|\Delta \mathbf{\Lambda}^{(k+1)}\|_{\rm F}^2],
    \end{aligned}
\end{equation}
% \begin{equation}
%     \label{eq:augineq}
%     \begin{aligned}
%         &\sum\nolimits_{k=0}^{K-1}(\mathcal{L}_{\tau^{(k)}}^{(k)} - \mathcal{L}_{\tau^{(k+1)}}^{(k+1)}) \stackrel{}{=} \mathcal{L}_{\tau^{(0)}}^{(0)} - \mathcal{L}_{\tau^{(K)}}^{(K)}\\ 
%         &\stackrel{(a)}{\geq} \sum_{k=0}^{K-1}[(\alpha/2)\|\mathbf{P}^{(k)} - \mathbf{P}^{(k+1)}\|_{\rm F}^2
%         + (\beta/2)\|\mathbf{D}^{(k)} - \mathbf{D}^{(k+1)}\|_{\rm F}^2\\& - \frac{\tau^{(k)}+\tau^{(k+1)}}{2(\tau^{(k)})^2}\|\mathbf{\Lambda}^{(k)} - \mathbf{\Lambda}^{(k+1)}\|_{\rm F}^2],
%     \end{aligned}
% \end{equation}
where (a) is due to Lemma \eqref{lemma:1}.
Since $\mathbf{\Lambda}^{(k)}$ is bounded and $\mathbf{P}^{(k)}$ is feasible with respect to the power constraint for all $k$, we have $\lim_{K \to \infty} -\mathcal{L}_{\tau^{(K)}}^{(K)}< \infty$. 
Consequently, the right-hand sum in \eqref{eq:augineq} is also bounded as $K \to \infty$. 

Considering the series corresponding to the last term of \eqref{eq:augineq}, we obtain 
% Setting a geometric update rule for $\rho^{(k+1)}$ with $\sigma > 1$ 
% By the properties of our update rule for $\rho(k)$, and the boundedness of $\mathbf{\Lambda}^{(k)}$, we have
% and considering a series of the last term in \eqref{eq:augineq} yields
% \begin{equation}
%     \rho^{(k+1)} = \sigma^{k+1}\rho^{(0)} \implies \frac{\rho^{(k)} + \rho^{(k+1)}}{2\rho^{(k)^2}} = \frac{1+\sigma}{2\rho^{(0)}\sigma^k}.
% \end{equation}
% As such, the following series is convergent by the boundedness of $\mathbf{\Lambda}^{(k)}$, namely
\begin{equation}
\label{eq:boundedlag}
\sum\nolimits_{k=0}^{\infty}\delta \tau^{(k+1)}\|\Delta \mathbf{\Lambda}^{(k+1)}\|_{\rm F}^2
\leq 2CM\sum\nolimits_{k=0}^{\infty}\frac{1}{\tau^{(k)}},
\end{equation}
%\vspace{-0pt}
by the boundedness of $\mathbf{\Lambda}^{(k)}$ and the bound $\tau^{(k+1)}/\tau^{(k)} \leq C$. Since $\frac{1}{\tau^{(k)}}$ is summable, we have that \eqref{eq:boundedlag} is finite. Hence, the contribution of $(\alpha/2)\|\Delta \mathbf{D}^{(k+1)}\|_{\rm F}^2$ and $(\beta/2)\|\Delta\mathbf{P}^{(k+1)}\|_{\rm F}^2$ is finite as $k \to \infty$ in \eqref{eq:augineq}; thus, $\mathbf{D}^{(k)} - \mathbf{D}^{(k+1)} \to 0, \mathbf{P}^{(k)} - \mathbf{P}^{(k+1)} \to 0$.

The next part of the proof is to show that the KKT-conditions corresponding to \eqref{eq:optprob}, 
\begin{equation}
\label{eq:kktconds}
\begin{aligned}
    &\text{KKT-1}: \mathbf{P}\mathbf{\Lambda} = \mathbf{D}\\
    &\text{KKT-2}: \mathbf{D}\mathbf{\Lambda}^{\rm T} = \mathbf{P}\text{diag}(\gamma_1,\ldots, \gamma_p)\\
    &\text{KKT-3}: (\mathbf{G}\odot \mathbf{H}^{-1})^{\rm T} = \mathbf{P}^{\rm T}\mathbf{D}\\
    &\text{KKT-4}: \gamma_j(\|\mathbf{p}_j\|_2^2-P_j) = 0, \: j=1,\ldots,p\\
    &\text{KKT-5}: \gamma_j \geq 0, \: j=1,\ldots p,\\
\end{aligned}
\end{equation}
are satisfied in the limit $k \to \infty$ by the update-rules in Algorithm \ref{alg:admm}.

Beginning with the update for $\mathbf{P}^{(k+1)}$ in Algorithm \ref{alg:admm}, we have that the optimality condition at iteration $k$ becomes
\begin{equation}
-\nabla_{\mathbf{P}} \hat{\mathcal{L}}_{\tau^{(k)}}(\mathbf{D}^{(k+1)},\mathbf{P}^{(k+1)},\mathbf{\Lambda}^{(k)}) \in N_{\Omega}(\mathbf{P}^{(k+1)}).
\end{equation}
Using the definition of the normal cone $N_{\Omega}$ of our set $\Omega$, and substituting $\mathbf{\Lambda}^{(k)} =\mathbf{\Lambda}^{(k+1)}-\tau^{(k)}((\mathbf{G} \odot \mathbf{H}^{-1})^{\rm T}-\mathbf{P}^{(k+1)^{\rm T}}\mathbf{D}^{(k+1)})$, we get
\begin{equation}
\label{eq:kktp}
\mathbf{D}^{(k+1)^{}}\mathbf{\Lambda}^{(k+1)^{\rm T}} = \mathbf{P}^{(k+1)}\mathbf{\Gamma}^{(k+1)} + \beta(\mathbf{P}^{(k+1)}-\mathbf{P}^{(k)})
\end{equation}
where $\mathbf{\Gamma}$ is a diagonal matrix of non-negative coefficients $\gamma_1^{(k+1)}, \ldots, \gamma_p^{(k+1)}$. Since we showed that $\mathbf{P}^{(k+1)} \to \mathbf{P}^{(k)}$ with $k \to \infty$, then \eqref{eq:kktp} approaches KKT-2 as $k \to \infty$, with $\mathbf{\Gamma}$ containing the coefficients $\gamma_1,\ldots,\gamma_p$ on the diagonal, satisfying KKT-4 and KKT-5 by the definition of the normal cone $N_{\Omega}(\mathbf{P}^{(k+1)})$ of the set $\Omega$ at $\mathbf{P}^{(k+1)}$.

The remaining updates are analogous to the derivations in \cite{MatFact}, and are therefore omitted in the interest of space.

% Continuing with the update for $\mathbf{D}^{(k+1)}$ in Algorithm \ref{alg:admm}, the optimality condition is
% \begin{equation}
%     \nabla_{\mathbf{D}}\hat{\mathcal{L}}_{\rho^{(k)}}(\mathbf{D}^{(k+1)},\mathbf{P}^{(k)}, \mathbf{\Lambda}^{(k)}) = \mathbf{0}.
% \end{equation}
% Substituting $\mathbf{\Lambda}^{(k)}$ in the same way yields
% \begin{equation}
%     \begin{aligned}
%     \mathbf{P}^{(k+1)}\mathbf{\Lambda}^{(k+1)} &= \rho^{(k)}\mathbf{P}^{(k+1)}(\mathbf{P}^{(k+1)^{\rm T}}-\mathbf{P}^{(k)^{\rm T}})\mathbf{D}^{(k+1)^{\rm T}}\\
%     &+ \alpha(\mathbf{D}^{(k+1)}-\mathbf{D}^{(k)})+\mathbf{D}^{(k+1)}
%     \end{aligned}
% \end{equation}
\subsection{Proof of Lemma \ref{lem:2}}

Consider matrices $\mathbf{P}, \mathbf{D}, \mathbf{\Lambda}$ and scalars $\gamma_1,\ldots\gamma_p$ that satisfy the KKT conditions \eqref{eq:kktconds}.
We note that for $\mathbf{\Lambda}$ to become unbounded and still satisfy KKT-1 and KKT-2, we need 
\begin{equation}
    \label{eq:nulldelta}
    \mathbf{\Lambda} = t\mathbf{\Delta}+\mathbf{\Lambda}_0,
\end{equation}
with $\mathbf{\Lambda}_0$ satisfying KKT-1, KKT-2 and $\mathbf{\Delta}$ having columns in $N(\mathbf{P})$ and rows in $N(\mathbf{D})$, so that $t \to \infty$ in \eqref{eq:nulldelta} does not affect \eqref{eq:kktconds}. We have rank$(\mathbf{G}) = \min(m,p)$, so that $T \geq \min(m,p)$. Since $\mathbf{P}$ is $T \times s$, then the dimension of $N(\mathbf{P})$ is $\max(0,p-T)$, and similarly the dimension of $N(\mathbf{D})$ is $\max(0,m-T)$, and therefore $N(\mathbf{D}) \cap N(\mathbf{P}) = \{0\}$ so that $\mathbf{\Delta} = \mathbf{0}$.
\label{appendix:3}
\end{document}